\newcommand{\Ei}{\operatorname{Ei}}
\newtheorem{theorem}{Theorem}
\newtheorem{remark}{Remark}
\newtheorem{lemma}{Lemma}
\newtheorem{proposition}{Proposition}
\newtheorem{definition}{Definition}
\newtheorem{corollary}{Corollary}
\newcommand{\filt}{\mathcal{F}}
\newcommand{\prob}{\mathbb{P}}
\begin{document}

\title{Geometric Asian Option Pricing in General Affine
Stochastic Volatility Models with Jumps}
\author{Friedrich Hubalek, \\
{\small Vienna University of Technology, Financial and Actuarial Mathematics,
} \\
{\small Wiedner Hauptstra\ss {}e~8/105--1, A--1040 Vienna, Austria
(fhubalek@fam.tuwien.ac.at)} \\
Martin Keller-Ressel, \\
{\small Institut f\"ur Mathematik, TU Berlin,} \\
{\small Strasse des 17. Juni 136, D--10623 Berlin, Germany
(mkeller@math.tu-berlin.de)} \and Carlo Sgarra\thanks{%
Corresponding author. Tel.: +39 02 2399 4570; fax: +39 02 2399 4621} \\
{\small Department of Mathematics, Politecnico di Milano,}\\
{\small Piazza Leonardo da Vinci, 32, I--20133 Milan, Italy
(carlo.sgarra@polimi.it)}}
\maketitle

\begin{abstract}
In this paper we present some results on Geometric Asian option valuation
for affine stochastic volatility models with jumps. We shall provide a general
framework into which several different valuation problems based on some average
process can be cast, and we shall obtain close-form solutions for some relevant
affine model classes.

\textbf{Keywords: Geometric Asian Options, Average Strike Options, Average
Price Options, Stochastic Volatility, Affine Processes.}
\end{abstract}

\section{Introduction}

Asian options are quite common derivatives often combined with
other financial claims in order to construct structured products \cite{Kat};
they can in fact provide protection against strong price fluctuations in
volatile markets and reduce the possibilities of market manipulations near
the expiry. That is because Asian options are roughly speaking options on
the average value assumed by the underlying during the option's life, and
they require some mathematical effort in order to describe the dynamics of the
average under consideration. For these reasons it is interesting to develop
realistic financial models and efficient numerical algorithms to evaluate these
kind of options.

Asian options are usually grouped into two main classes according to their
payoff: if $S_{T}$ is the value of the underlying asset at maturity $T$, $K$ is
the strike price, and $A_{T}$ is a suitably defined average of
the values assumed by the stock during the period under consideration,
the so called "Average Strike" (sometimes called "Floating Strike")
Asian Calls have the payoff given by the following expression: $\left(
S_{T}-A_{T}\right) _{+}$,\ while the payoff of the "Average Price"
(Sometimes called "Fixed Strike" or "Average Rate") Asian Calls is given by:
$\left( A_{T}-K\right) _{+}$. The average considered can be the Geometric
or the Arithmetic one and it can be calculated on a continuous or discrete
monitoring basis. All these details are specified by the contracts stipulated
by two counterparts, as Asian options are mainly OTC (Over the Counter) traded
financial derivatives.

Several results are available on Arithmetic Asian options. In the classical
Black-Scholes framework the papers by H.Geman, M. Yor \cite{Geman/Yor1993}
and by D. Dufresne \cite{Dufresne2001} present an evaluation approach based
on exponential functionals of Brownian motion properties, while in another
paper by D. Dufresne \cite{Dufresne2005}, some explicit valuation formulas
related to the Bessel process are given. More recently in the paper by M.
Schr\"oder \cite{Schroeder2008}\ a comprehensive analysis of the Arithmetic
Asian options is provided, emphasizing the role plaid by complex analysis
and special functions in solving the main valuation problems. In a more
general Exponential L\'{e}vy setting some results on Arithmetic Asian
options are included in the papers by H. Albrecher and M. Predota \cite%
{Albrecher/Predota2004}, where the L\'{e}vy process describing the
underlying evolution is assumed to be of NIG type, and by H. Albrecher \cite%
{Albrecher2004}. When explicit formulas do not exist some accurate analytic
approximations have been proposed, like in \cite{Milevsky1998}.
The paper by J. Ve\v{c}er and M. Xu \cite{Vecer/Xu2004} is the only one, to our
knowledge, dealing with the valuation problem of Arithmetic Asian options in
a general semimartingale setting, where a Partial Integro-Differential
Equation is provided solving the problem in the special case of an
underlying described by a process with independent increments. The paper by J.-P.
Fouque and C.-H. Han \cite{Fouque/Han2003} deals with the evaluation problem
for Arithmetic Asian options in a stochastic volatility framework by extending
the reduction technique introduced by J. Ve\v{c}er and M. Xu \cite{Vecer/Xu2004}.
As far as lower and upper bounds on prices are concerned some results are available
for Arithmetic Asian options both in the continuous \cite{Goovaerts2000} and
discrete monitoring case \cite{Goovaerts2006}, where a convenient use of the
comonotonicity property is exploited in order to provide such bounds. The
Hedging issue of Asian options has been considered in \cite%
{AlbrecherGoovaerts2003}, where a static strategy is examined.

As far as Geometric Asian Options are concerned, their evaluation in the
basic Black-Scholes setting is very simple. While a direct argument can
provide an explicit solution for Geometric Average Rate Calls (under continuous monitoring),
a slightly more involved calculation can provide at least an accurate
numerical approximation for the Average Strike Call (still under continuous monitoring)
in the same framework \cite{Wilmott/Dewynne/Howison}.
As far as the L\'{e}vy models are concerned, several results are also available:
we recall here the paper by C.B. Zhang and C.W. Oosterlee \cite{ZO2013}.
For the discrete monitoring case some definite results for L\'{e}vy models
are illustrated in \cite{Fusai/Meucci2008}.

It has been pointed out \cite{Glasserman} that the Geometric Asian option
pricing is extremely useful also for the arithmetic average option valuation
via Monte Carlo methods with control variables.

As far as stochastic volatility models are concerned,  while Y.L. Cheung and H.Y.
Wong \cite{Cheung/Wong2004} obtain via a perturbation method some
semi-analytical formulas for Geometric Asian options in stochastic
volatility models exhibiting a mean-reverting behavior.
The paper by I. Peng deals with Geometric Asian options valuation in a
local volatility setting, namely in the CEV model \cite{Pen2006}.
More recent results, standing on Asymptotic expansion techniques,
have been obtained by E. Gobet and M. Miri \cite{Gobet/Miri2011}.
In a very recent paper by B. Kim and I.-S. Wee \cite{KW2011} the Geometric Asian
option pricing problem has been studied for the stochastic volatility model
proposed by Heston.

The intrinsic limitations of the Black-Scholes model are well known since a
long time. In particular, the fat tail, the volatility clustering, the
aggregational gaussianity features exhibited by stock prices distributions,
and moreover the volatility smiles and the leverage effect empirically
observed cannot be explained by this model. While L\'{e}vy-based and
stochastic volatility models can explain some of these phenomena separately,
the models including both features, i.e. stochastic volatility \emph{and}
jumps, can provide a much more realistic description of stock prices
behavior. Several models of this type have been proposed in the literature
and we just mention here the models suggested by D. Bates \cite{Bates1996},
\cite{Bates2000}, by O.E. Barndorff-Nielsen and N. Shephard \cite{BNS2001},
\cite{BNNS2002}, and the Time-Changed L\'evy models proposed by P. Carr,
H. Geman, D. Madan and M. Yor  \cite{CGMY2003}, \cite{CW2004} among others.
The price to pay for this substantial improvement in modeling is a
bigger difficulty in performing calculations for evaluating derivatives.
Very few results are available in this more general setting.

In \cite{HS2011} a semi-explicit evaluation formula for
Geometric Asian Options, for fixed and floating strike, under continuous
monitoring, when both stochastic volatility and jumps come into play has
been provided; in that paper a specific model framework was considered, i.e.
the Barndorff-Nielsen and Shephard model.

Almost all of the above mentioned pricing models in which stochastic volatility
features have been combined with jumps belong to the large family of
affine models, according to the definition provided by D. Duffie, D. Filipovic
and W. Schachermayer \cite{duffie2003affine}. This class includes almost all
the most popular pricing models existing in the literature related to many
different type of underlying assets: fixed income securities, credit risk
models, equities and commodities. Many relevant features of these models
can be described in a unified way by the very general framework provided
by the affine process approach. For an extensive treatment of the general
properties of affine models and some related technical issues we mention
the Thesis by M. Keller-Ressel \cite{KR2008}.

We want to recall here another relevant class of valuation problems
requiring the description of some average process, i.e. options on realized
volatility and variance swaps. Several recent papers attacked these valuation
problems in different setting. In the paper by J. Kallsen, J. Muhle-Karbe and
M. Voss \cite{KMKV2011} the pricing of options on variance in affine stochastic
volatility models has been extensively investigated; some results in a
Barndorff-Nielsen and Shephard modeling framework were provided in \cite{BGK2007},
while in \cite{CLW2012} variance swaps pricing has been studied for time-changed
L\'{e}vy models.

The contribution of the present paper is to develop a general valuation scheme
and to provide some semi-explicit evaluation formulas for Geometric Asian Options,
when the underlying process describing the joint dynamics of logreturns and
volatility is affine. We shall provide a quite general framework into which
several different valuation problems can be formulated and solved: all those
based on the geometric mean of the variables, including average price and
average strike Asian call options; variance options valuation can be also
cast into the present framework, but only for continuous return processes,
as we shall discuss in Section \ref{integral}.

In next section we shall introduce the general setting and the notations used
throughout the paper, while in Section \ref{integral} we shall present the
introductory results on the affine representation for integral functionals.
After providing in Section \ref{changenum} an auxiliary result based on a
change-of-numeraire technique, which will turn out to be useful for Average
Strike option calculations, in Section \ref{general} we shall present the
general results on Geometric Asian options valuation in a general affine
framework.
In Section \ref{sec:Examples} we shall apply our general results to the most popular
concrete affine stochastic volatility models and we shall provide the semi-explicit
formulas for both Average Price and Average Strike options for these models.
In section \ref{concluding} we shall resume the main results obtained in this
paper and we'll outline some possible developments of the present work.

\section{Model Setup}

The purpose of this section is to clarify the framework in which
we are going to develop our pricing problem and to clarify the
basic notations adopted in the following. The results recalled
here are mainly based on the treatment provided in \cite {KR2008}
and \cite {KellerRessel2010}.
We fix some time horizon $T>0$ up to which we wish to model the price
process of some financial asset.
Let $(\Omega,\filt,(\filt_t)_{t\in[0,T]},\prob)$ be a filtered probability space,
which supports all the processes we encounter in the sequel.

We shall call an affine process
a stochastically continuous, time-homogeneous
Markov process $(X_t, \mathbb{P}^{x})$ with state space
$D= \mathbb{R}_+ ^{m} \times \mathbb{R}^{n}$ if its
characteristic function is an exponentially affine function of the state vector,
i.e., if there exist functions $\phi : \mathbb{R}_+ \times \mathcal{U} \rightarrow \mathbb{C}_-$,
${\bm\psi}:\mathbb{R}_+\times \mathcal{U} \rightarrow \mathcal{U}$ such that
\begin{equation}
\log \bigl(\mathbb{E} [\exp {\bm{u} \cdot \bm{X}_{t} }|\bm X_0] \bigr)
=\phi(t,\bm{u})+ \bm X_0\cdot {\bm\psi}(t,\bm{u})
\end{equation}
for all $(t,\bm{u}) \in \mathbb{R}_+ \times \mathcal{U}$ and where
\begin{equation}
\mathbb{C}_- := \{u \in \mathbb{C}: \Re\,u \le 0\} \quad \text{and} \quad \mathcal{U} := \mathbb{C}_-^m \times i \mathbb{R}^n.
\end{equation}
By convention, the logarithm above denotes the distinguished logarithm in complex plane, that makes $\phi$ and $\psi$ jointly continuous in the complex plane (cf. \cite{duffie2003affine}). Note that due to the Markov property an analogous equation also holds true for expectations conditional on $\bm{X}_s$, that is
\begin{equation}
\log \bigl(\mathbb{E} [\exp {\bm{u} \cdot \bm{X}_{t} }|\bm X_s] \bigr)
=\phi(t-s,\bm{u})+ \bm X_s\cdot {\bm\psi}(t-s,\bm{u})
\end{equation}
for all $0 \le s \le t$ and $\bm u \in \mathcal{U}$.
An affine process is called \textbf{regular} if the derivatives:%
\begin{equation*}
F(\bm{u}):=\frac{\partial \phi}{\partial t}(t,\bm{u})|_{t=0^{+}},\
{\mathbf R}(\bm{u}):=\frac{%
\partial {\bm\psi} }{\partial t}(t,\bm{u})|_{t=0^{+}},
\end{equation*}%
exist for all $\bm{u}\in \mathcal{U}$, and are continuous at $\bm{u}=0$. It has been shown in \cite{KellerSchachTeich2010} that \emph{any} affine process in the sense of the above definition is regular and hence that the functions $F(\bm{u})$ and $R(\bm{u})$ are well-defined.

Since the functions $F(u)$ and ${\bm R}({\bm u})$ completely characterize the process
$(\bm{X}_{t})_{t\geq 0}$ they are called the \textbf{functional characteristics of}
$(\bm{X}_{t})_{t\geq 0}$.

In the following we shall need the notion of \textbf{truncation function},
but we'll specify which truncation function will be used whenever it will be
necessary.
When an affine process will be assumed to describe the price dynamics
of some asset, we shall refer to it as an affine pricing model.

In the following we shall assume the (risk-neutral) stock price process
$S_t$ to be given as

\begin{equation}  \label{eq:AssetPrice}
S_t=\exp\bigl\{(r-q)t+X_t\},
\end{equation}

where $r$ is the risk-free interest rate, $q$ is the dividend yield and $X_t$ is the
discounted dividend-corrected log-price process.\newline
Let $V_t$ denote another (one-dimensional) process with $V_0>0$,
such that $(X_t,V_t)$ is a stochastically continuous, time-homogeneous Markov process.

We define the process $(X_t,V_t)$ an Affine Stochastic Volatility
model if the cumulant generating function of $(X_t,V_t)$ is of the special affine form
\begin{equation}  \label{eq:AffineCF}
\log\bigl(\mathbb{E}[\exp\{uX_t+wV_t\}|X_0,V_0]\bigr)%
=\phi(t,u,w)+V_0\psi(t,u,w)+X_0u.
\end{equation}
Note that this setup is as in \cite[Section 5]{KellerRessel2010}, from where
we will adopt the nomenclature and call $(X_t,V_t)$ affine stochastic volatility
(ASV) process and the associated asset price model ASV model.\newline

\begin{remark}\label{R12}
A bivariate affine model has functional characteristics $F$ and $\bm R=(R_1,R_2)$.
An ASV has $R_1=0$ and we set $R=R_2$ and call simply $F,R$ the functional characteristics.
\end{remark}

The following theorem characterizes regular ASV processes and provides a
representation result for the functions $F$, $R$.

\begin{theorem}
{\cite[Theorem 2.7]{duffie2003affine}}\label{Thm:AffineCF}
Let $\left( X_{t}, V_{t} \right) _{t\geq 0}$ be a regular ASV process.
Then there exist a set of parameters $(a,\alpha ,b,\beta ,c,\gamma ,m,\mu )$
whrer $a,\alpha $ are positive semi-definite matrices, $b,\beta \in \mathbb{R}^{2}$,
$c,\gamma \geq 0$ and $m,\mu $ are L\'{e}vy measures on $\mathbb{R}^{2}$, such that
 \begin{align*}
F(u,w)& =\frac{1}{2}(u,w)\cdot a\cdot (u,w)^{\text{T}}+b\cdot (u,w)^{\text{T}%
}-c+\int_{D\backslash \left\{ 0\right\} }\left( \text{e}%
^{xu+yw}-1-h_{F}(x,y)\cdot (u,w)^{\text{T}}\right) m(dx,dy) \\
R(u,w)& =\frac{1}{2}(u,w)\cdot \alpha \cdot (u,w)^{\text{T}}+\beta \cdot
(u,w)^{\text{T}}-\gamma +\int_{D\backslash \left\{ 0\right\} }\left( \text{e}%
^{xu+yw}-1-h_{R}(x,y)\cdot (u,w)^{\text{T}}\right) \mu (dx,dy),
\end{align*}%
holds, where $h_{F}(x,y)$, $h_{R}(x,y)$ are suitable truncation functions.
Furthermore the functions $\phi$ and $\psi$ in \eqref{eq:AffineCF} fulfill the generalized Riccati equations:
\begin{align}
\partial _{t}\phi(t,u,w)& =F(u,\psi(t,u,w)), & \phi(0,u,w)& =0,  \label{Riccatieq} \\
\partial _{t}\psi(t,u,w)& =R(u,\psi(t,u,w)), & \psi(0,u,w)& =w.  \notag
\end{align}%
\end{theorem}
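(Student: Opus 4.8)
The plan is to obtain the statement as the specialization to $m=n=1$ of the general characterization of regular affine processes on $D=\mathbb{R}_+^m\times\mathbb{R}^n$, with the particular structure \eqref{eq:AffineCF} of an ASV process folded in. First I would observe that \eqref{eq:AffineCF} is exactly the affine form required of a bivariate affine process in the sense of the first definition of this section, subject to the extra constraints that the coefficient multiplying $X_0$ equals the constant $u$ (independent of $t$) and that $\phi$ carries no $X_0$--dependence. In the notation of a bivariate affine process this says $\psi_1(t,u,w)=u$ and $\psi_2(t,u,w)=\psi(t,u,w)$, so that $R_1\equiv 0$; this is precisely the reduction recorded in Remark \ref{R12}, and it is what singles out the ASV subclass inside the general affine class.

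Next I would derive the Riccati equations \eqref{Riccatieq} from the Chapman--Kolmogorov identity implied by the Markov property. Conditioning \eqref{eq:AffineCF} at an intermediate time $s$ and using the tower property, one gets the semiflow identities
\begin{align*}
\phi(t+s,u,w) &= \phi(s,u,w) + \phi(t,u,\psi(s,u,w)), \\
\psi(t+s,u,w) &= \psi(t,u,\psi(s,u,w)),
\end{align*}
together with $\psi_1\equiv u$. Differentiating these in $s$ at $s=0^+$ and invoking regularity --- which guarantees that $F(u,w)=\partial_t\phi(t,u,w)|_{t=0^+}$ and $R(u,w)=\partial_t\psi(t,u,w)|_{t=0^+}$ exist and are continuous at the origin --- yields $\partial_t\phi(t,u,w)=F(u,\psi(t,u,w))$ and $\partial_t\psi(t,u,w)=R(u,\psi(t,u,w))$, with the initial conditions $\phi(0,\cdot)=0$, $\psi(0,u,w)=w$ coming from $t=0$ in \eqref{eq:AffineCF}. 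The continuity of the distinguished logarithm is exactly what allows these manipulations to be carried out at the level of cumulant generating functions rather than characteristic functions, and it is what makes $\phi,\psi$ jointly continuous, so that the composition on the right-hand side is well defined.

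It then remains to derive the \Levy--Khintchine-type representations of $F$ and $R$. For each fixed admissible $w$, the map $u\mapsto F(u,w)$ arises as the time-derivative at $0$ of a cumulant generating function of an $\mathbb{R}^2$-valued infinitely-divisible-type object; by the classical \Levy--Khintchine theorem (equivalently, the Courr\`ege representation of generators of Feller semigroups) it must be a quadratic form in $(u,w)$, plus a linear term, plus a constant, plus an integral against a \Levy\ measure on $D\setminus\{0\}$ with truncation function $h_F$, and likewise for $R$ with its own measure $\mu$. The admissibility constraints --- $a,\alpha$ positive semi-definite, $c,\gamma\ge 0$, $m,\mu$ \Levy\ measures, and $R_1\equiv 0$ --- are precisely the conditions forcing the generated semigroup to preserve functions supported on $D=\mathbb{R}_+\times\mathbb{R}$ and positivity of the transition kernels; I would check them by testing the generator against the exponentials $\mathrm{e}^{xu+yw}$ and using that the boundary $\{x=0\}$ of the state space must be respected, which is where the first coordinate living in $\mathbb{R}_+$ enters.

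The step I expect to be the main obstacle is the one that upgrades the above from formal to rigorous: showing that $\phi,\psi$ are differentiable in $t$ for \emph{all} $t$ and not merely at $t=0^+$, and that the semiflow identities may be differentiated in $s$. This is handled by combining the continuity of the distinguished logarithm, the local boundedness of $\phi,\psi$ near the origin, and a bootstrapping argument through the flow equation; since the assertion is exactly Theorem 2.7 of \cite{duffie2003affine} specialized to $m=n=1$, together with Remark \ref{R12}, and since regularity of \emph{every} affine process is established in \cite{KellerSchachTeich2010}, I would present here only the reduction to the bivariate case and the bookkeeping of the ASV-specific structure, and invoke those references for the remaining analytic details.
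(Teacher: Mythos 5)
The paper gives no proof of this theorem --- it is quoted directly from \cite[Theorem 2.7]{duffie2003affine} together with the ASV reduction recorded in Remark~\ref{R12}, and your proposal ultimately rests on the same citations (plus \cite{KellerSchachTeich2010} for regularity), so the approaches coincide. Your sketch of the underlying argument (semiflow identities from the Markov property, differentiation at $0^{+}$ using regularity, \Levy--Khintchine form of $F$ and $R$, admissibility from preservation of $\mathbb{R}_{+}\times\mathbb{R}$) is the standard one and is correct, up to the harmless bookkeeping point that to obtain $\partial_t\psi=R(u,\psi(t,u,w))$ one must differentiate in $s$ at $s=0^{+}$ the semiflow identity written in the order $\psi(t+s,u,w)=\psi(s,u,\psi(t,u,w))$, whereas the ordering you display yields instead the transport form $\partial_t\psi=\partial_w\psi\cdot R(u,w)$.
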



For option pricing we employ a structure preserving martingale measure.
This means, we choose an equivalent martingale measure, such that the model
structure remains unchanged, only model parameters change. For several
particular models enjoying the affine structure, a systematic investigation
has been performed on the class of equivalent martingale measure, also providing
a full characterization of the subclass of structure preserving measures:
for the BNS model we mention the paper by E. Nicolato and E. Venardos
\cite{NicolatoVenardos2003}, while for the Bates model a brief discussion
on the subject is included in \cite{Bates1996}.

The following proposition provides a sufficient condition for an affine
process to be conservative (i.e. non-exploding) and a martingale:

\begin{proposition}\label{martprop}
{\cite[Corollary 2.1]{KellerRessel2010}} Let $(X_t,V_t)$ be defined as
before and the quantity $\chi(u)$ be defined as follows:
\begin{equation}
\chi(u):=\frac{\partial R}{\partial w}(u,w)|_{w=0}.
\end{equation}
If $F(0,0)=R(0,0)=F(1,0)=R(1,0)=0$ and $\max\{\chi(0),\chi(1)\}<
\infty$, then $\exp\{X_t\}$ is a conservative process and a martingale.
\end{proposition}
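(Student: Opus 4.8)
The plan is to deduce both conclusions — conservativeness and the martingale property of $\exp\{X_t\}$ — from the generalized Riccati equations \eqref{Riccatieq} together with the boundary conditions on $F$ and $R$ at the two relevant points $u\in\{0,1\}$, $w=0$. First I would treat the case $u=0$. Since $\psi(0,0,0)=0$ and $\partial_t\psi(t,0,0)=R(0,\psi(t,0,0))$ with $R(0,0)=0$, the constant function $\psi(t,0,0)\equiv 0$ solves the Riccati ODE; I would invoke uniqueness of solutions to the Riccati system (which holds for regular affine processes, cf. \cite{duffie2003affine,KR2008}) to conclude $\psi(t,0,0)=0$ for all $t\ge 0$. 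Then $\partial_t\phi(t,0,0)=F(0,\psi(t,0,0))=F(0,0)=0$ with $\phi(0,0,0)=0$ gives $\phi(t,0,0)=0$. Plugging $u=w=0$ into \eqref{eq:AffineCF} then yields $\mathbb{E}[1\mid X_0,V_0]=\exp\{\phi(t,0,0)+V_0\psi(t,0,0)+0\}=1$, i.e. no mass escapes to the cemetery state, so $(X_t,V_t)$ — and in particular $\exp\{X_t\}$ — is conservative.

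Next I would do the analogous computation at $u=1$. Here the subtlety is that the constant function $w\equiv 0$ need not solve $\partial_t\psi=R(1,\psi)$ unless one first checks that the solution stays in the region where everything is finite; this is where the condition $\max\{\chi(0),\chi(1)\}<\infty$ enters. The quantity $\chi(1)=\partial_w R(1,w)\big|_{w=0}<\infty$ guarantees that $R(1,\cdot)$ is well-defined and differentiable in a neighborhood of $w=0$, so that the Riccati ODE for $t\mapsto\psi(t,1,0)$ has a genuine (local, hence by the a priori bound global) solution, and since $R(1,0)=0$ the constant $\psi(t,1,0)\equiv 0$ is that solution. Then, exactly as before, $\partial_t\phi(t,1,0)=F(1,0)=0$ forces $\phi(t,1,0)=0$, and \eqref{eq:AffineCF} evaluated at $(u,w)=(1,0)$ gives $\mathbb{E}[\exp\{X_t\}\mid X_0,V_0]=\exp\{X_0\}$. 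By the conditional version of \eqref{eq:AffineCF} (stated in the excerpt via the Markov property) the same identity holds with $X_0,V_0$ replaced by $X_s,V_s$, which is precisely the martingale property of $\exp\{X_t\}$ with respect to $(\filt_t)$, provided $\exp\{X_t\}$ is integrable — but that integrability is exactly the finiteness of the expectation we just computed.

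The main obstacle — really the only nontrivial point — is justifying that the Riccati solution $\psi(t,1,0)$ is well-defined and stays at $0$, i.e. that one may evaluate $R$ and $F$ at $u=1$ in the first place. For $u$ in the domain $\mathcal{U}$ of the original definition everything is automatic, but $u=1$ lies outside $\mathbb{C}_-$, so one must extend the Riccati machinery to this "moment" direction; the role of the hypotheses $F(1,0)=R(1,0)=0$ and $\chi(1)<\infty$ is precisely to make that extension legitimate and to pin down the value of $\psi$ at $0$. Once the analytic domain issue is handled (invoking the regularity theory of \cite{KellerSchachTeich2010} and the results collected in \cite{KR2008,KellerRessel2010}), the rest is the short ODE/uniqueness argument sketched above. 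I would also remark that $\chi(0)<\infty$ is what similarly legitimizes the $u=0$ computation in the conservativeness part, closing the argument symmetrically. $\hfill\Box$
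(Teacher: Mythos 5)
The paper offers no proof of this proposition: it is imported verbatim, with citation, from \cite[Corollary~2.1]{KellerRessel2010}, so there is nothing in the text to compare your argument against. Your sketch does follow the lines of the proof in that reference: conservativeness is read off from $\psi(t,0,0)\equiv 0$ and $\phi(t,0,0)\equiv 0$ via the Riccati system, and the martingale property from the analogous computation at $u=1$ combined with time-homogeneity and the Markov property. You also correctly locate the two genuinely nontrivial points, namely the validity of the affine transform formula at the real argument $u=1$ (which lies outside $\mathcal{U}$) and the role of the finiteness of $\chi$.

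One step in your write-up is misleading as stated, though. In the conservativeness part you ``invoke uniqueness of solutions to the Riccati system (which holds for regular affine processes).'' That is exactly backwards: for a general regular affine process the initial value problem $\dot g = R(0,g)$, $g(0)=0$ can admit a nontrivial solution $g\le 0$ besides the constant one, and the function $\psi(t,0,0)$ attached to the process is then that nontrivial solution, which is precisely how non-conservative affine processes arise. The hypothesis $\chi(0)=\partial_w R(0,w)\vert_{w=0}<\infty$ is not a side remark legitimizing the computation; it is the substance of the conservativeness claim, because it is a one-sided Lipschitz bound on the convex function $w\mapsto R(0,w)$ at $w=0^-$ (equivalently, a first-moment condition on the jump measure $\mu$) which forces uniqueness and hence $\psi(t,0,0)\equiv 0$. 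The same comment applies to $\chi(1)$ at $u=1$: its role is to pin the solution at $0$, not merely to make $R(1,\cdot)$ ``well-defined.'' With that correction, and granting the extension of the affine transform formula to $u=1$ that you defer to the literature, your outline matches the argument of the cited source.
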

Note that $F(0,0) = R(0,0) = 0$ is equivalent to $c = \gamma = 0$.

\section{Integral functionals for ASV models \label{integral}}
Our starting point is an affine ASV model $(X,V)$ as introduced above. To study Geometric Asian options or realized variance options we introduce the associated integral processes $Y$ and $Z$ with
\begin{equation}
Y_t=\int_0^tX_sds,\qquad
Z_t=\int_0^tV_sds.
\end{equation}

\begin{proposition}\label{main}
If $(X,V)$ is an ASV model with functional characteristics $(F,R)$,
then the joint law of $(X_t,V_t,Y_t,Z_t)$ is described by
\begin{equation}
\log E[e^{u_1X_t+u_2V_t+u_3Y_t+u_4Z_t}|X_0,V_0]=
\Phi(t,u_1,u_2,u_3,u_4)+(u_1+u_3t)X_0+\Psi(t,u_1,u_2,u_3,u_4)V_0
\end{equation}
where
\begin{align}
&\dot\Phi=F(u_1+u_3t,\Psi) && \Phi(0)=0\label{joint-law-riccati1}\\
&\dot\Psi=R(u_1+u_3t,\Psi)+u_4\label{joint-law-riccati2}&&
\Psi(0)=u_2.
\end{align}
\end{proposition}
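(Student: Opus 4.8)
The plan is to reduce the four-dimensional problem to the known two-dimensional affine transform formula for $(X,V)$ by exploiting the fact that $Y$ and $Z$ are pathwise integrals and hence do not add genuinely new dynamics. First I would fix a terminal time $t$ and the transform parameters $(u_1,u_2,u_3,u_4)$, and define the candidate function
\begin{equation}
M_s := \exp\left\{ g(s) X_s + \Psi(t-s)\, V_s + u_3 Y_s + u_4 Z_s + \Phi(t-s) \right\},
\end{equation}
where $g(s) := u_1 + u_3(t-s)$ and $\Phi(\cdot),\Psi(\cdot)$ solve the stated ODE system \eqref{joint-law-riccati1}--\eqref{joint-law-riccati2} (with argument $t-s$). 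The goal is to show $M_s$ is a $\prob$-martingale on $[0,t]$; then $E[M_t \mid X_0,V_0] = M_0$ gives exactly the claimed identity, since at $s=t$ we have $g(t)=u_1$, $\Psi(0)=u_2$, $\Phi(0)=0$, and at $s=0$ we have $g(0)=u_1+u_3 t$, $Y_0=Z_0=0$.

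The key computational step is to differentiate the conditional expectation $E[M_t\mid\filt_s]$ in $s$ and show it is constant. Equivalently, I would write $E[M_t\mid\filt_s]$ using the affine transform formula for $(X,V)$ \emph{conditioned at time $s$}: since $Y_s,Z_s$ are $\filt_s$-measurable and $Y_t - Y_s = \int_s^t X_r\,dr$, $Z_t-Z_s=\int_s^t V_r\,dr$, the problem on $[s,t]$ with initial data $(X_s,V_s)$ is structurally the same problem with horizon $t-s$. So by an inductive/bootstrap argument it suffices to verify the formula at $s=0$, i.e. to show that $s\mapsto h(s):=E[M_t\mid\filt_s]$ has zero drift. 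The clean way to organize this is to apply the Itô formula for the affine semimartingale $(X,V)$ to $\log M_s$; using that the generator of $(X,V)$ acts on $e^{g X + \Psi V}$ by multiplication with $g'X + \Psi' V + F(g,\Psi) + V\,[\,\text{the part of }R\,]$ — more precisely, the affine structure gives that the extended generator applied to $v\mapsto e^{gx+\psi v}$ produces $\bigl(F(g,\psi) + v R(g,\psi)\bigr)e^{gx+\psi v}$ — the drift of $M_s$ collects the terms
\begin{equation}
\dot g\, X_s + \dot\Psi(t-s)\cdot(-1)\, V_s + u_3 X_s + u_4 V_s + \dot\Phi(t-s)\cdot(-1) + F(g(s),\Psi(t-s)) + V_s\, R(g(s),\Psi(t-s)),
\end{equation}
and I would check that the $X_s$-terms cancel because $\dot g = -u_3$, the $V_s$-terms cancel by the $\Psi$-equation $\dot\Psi = R(g,\Psi)+u_4$, and the constant terms cancel by the $\Phi$-equation $\dot\Phi = F(g,\Psi)$. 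This is the heart of the proof: it is exactly the computation that turns the Riccati system into the martingale property, and the ODE system in the statement has evidently been reverse-engineered to make these cancellations work.

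The main obstacle is not the algebra but the analytic bookkeeping: one must justify that $M_s$ is a genuine martingale and not merely a local martingale, and that the Riccati system \eqref{joint-law-riccati1}--\eqref{joint-law-riccati2} has a solution on all of $[0,t]$ with $\Psi(s)$ staying in the domain $\mathcal{U}$ where $F,R$ are defined. For the real parts of the transform variables in the relevant range (e.g. $\Re u_1 \le 0$, $u_3,u_4$ suitably signed, or purely imaginary arguments) the function $|M_s|$ is bounded, so uniform integrability follows and the local martingale is a true martingale; for the domain question one invokes the standard theory of generalized Riccati equations for affine processes (the solution exists and stays in $\mathcal{U}$, cf. the results underlying Theorem \ref{Thm:AffineCF}), noting that the extra inhomogeneous term $u_3 t$ inside $F,R$ and the $+u_4$ in the $\Psi$-equation do not disrupt existence because they only amount to a time-dependent but continuous perturbation. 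An alternative, slightly softer route avoiding Itô's formula is to start from the known identity for $(X_{t},V_{t})$ conditional on $\filt_\epsilon$, let $\epsilon\downarrow 0$ along a partition, and pass to the limit in a Riemann-sum approximation of $Y_t,Z_t$ — but the Itô/martingale argument above is cleaner and is the one I would write up.
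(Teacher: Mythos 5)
Your proof is essentially correct, but it takes a genuinely different route from the paper. The paper does not run the martingale/generator verification itself: it simply invokes a known result (\cite[Theorem~4.10]{KR2008}) stating that appending the integrated coordinates $Y=\int X$, $Z=\int V$ to an affine process again yields an affine process, whose Riccati system contains the extra equations $\dot\psi_1=\psi_3$, $\dot\psi_3=0$, $\dot\psi_4=0$; these trivial equations are integrated to give $\psi_1(t)=u_1+u_3t$, $\psi_3=u_3$, $\psi_4=u_4$, and what remains is exactly the stated system for $\Phi$ and $\Psi$ (using $Y_0=Z_0=0$). Your argument instead reproves the needed special case from scratch: you posit the candidate exponential $M_s=\exp\{g(s)X_s+\Psi(t-s)V_s+u_3Y_s+u_4Z_s+\Phi(t-s)\}$ with $g(s)=u_1+u_3(t-s)$ and check that the drift vanishes; your bookkeeping is right --- the $X_s$-terms cancel because $\dot g=-u_3$ offsets $u_3\,dY_s=u_3X_s\,ds$ (and because $R_1\equiv 0$ for an ASV process, so the generator contributes no further $x$-term), the $V_s$-terms cancel by the $\Psi$-equation, and the constants by the $\Phi$-equation. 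This is in effect the proof of the cited theorem specialized to two integrated coordinates, and the time-dependent argument $u_1+u_3t$ in your $g$ is exactly the paper's $\psi_1(t)$ read backward in time. What each approach buys: the paper's proof is a three-line reduction to an existing general statement; yours is self-contained and makes visible precisely where the analytic hypotheses enter (true martingale versus local martingale, and global existence of $\Psi$ in the domain of $R$), which you correctly flag and which the paper leaves implicit in the citation. The only soft spot in your write-up is the ``inductive/bootstrap'' digression, which is not needed once the drift computation is done, and the uniform-integrability step, which as stated only covers purely imaginary $u_1,u_3$ (with $\Re u_2,\Re u_4\le 0$ and $\Re\Psi\le0$ preserved); extending to the real exponential moments used later requires the additional moment/Riccati-existence conditions of the kind referenced in Remark~\ref{rem-inta}.
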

\begin{proof}
It follows from \cite[Theorem~4.10, p.50]{KR2008} for two dimensions,
that $(X,V,Y,Z)$ is affine,
\begin{equation}
\log E[e^{u_1X_t+u_2V_t+u_3Y_t+u_4Z_t}|X_0,V_0,Y_0,Z_0]=
\Phi(t)+\psi_1(t)X_0+\psi_2(t)V_0+\psi_3(t)Y_0+\psi_4(t)Z_0,
\end{equation}
where the $\Phi$ and $\psi_i$ satisfy the Riccati equations
\begin{align}\label{Ric}
&\dot\Phi=F(\psi_1,\psi_2) && \Phi(0)=0\\
&\dot\psi_1=\psi_3 && \psi_1(0)=u_1\\
&\dot\psi_2=R(\psi_1,\psi_2)+\psi_4&& \psi_2(0)=u_2\\
&\dot\psi_3=0&& \psi_3(0)=u_3\\
&\dot\psi_4=0&& \psi_4(0)=u_4.
\end{align}
Remember that the solutions of (\ref{Ric}) depend on the parameters $u_1,u_2,u_3,u_4$,
thus $\psi_1(t)=\psi_1(t;u_1,u_2,u_3,u_4)$ etc.
Some of those equations can be immediately integrated.
Obviously $\psi_3(t)=u_3$, $\psi_4(t)=u_4$,
$\psi_1(t)=u_1+u_3t$ and
the only relevant
equations are
\begin{align}
&\dot\Phi=F(u_1+u_3t,\psi_2) && \Phi(0)=0\label{relevant1}\\
&\dot\psi_2=R(u_1+u_3t,\psi_2)+u_4&&
\psi_2(0)=u_2\label{relevant2}
\end{align}
Then we note that $Y_0=0$ and $Z_0=0$ and we set $\Psi=\psi_2$.
(\ref{joint-law-riccati1})and (\ref{joint-law-riccati2})
follow from (\ref{relevant1}) and (\ref{relevant2}).
\end{proof}


\begin{remark}
Variance swaps and options on realized variance in stochastic volatility models
with jumps have been studied in \cite{BGK2007} and \cite{Sep2008}.
In a general affince setting they have been investigated extensively in
the paper \cite{KMKV2011} where
the realized variance is approximated by the quadratic variation
of the log-return process. For continuous return processes, such as the Heston model, for example, the quadratic
variation $[X,X]$ and its predictable part $\langle X,X\rangle$ coincide
with integrated variance, which is our $Z$.
\end{remark}
The cumulant of the integrated variance can be computed according to
the following corollary, which turns out to be a special case
of \cite[Lemma~5.1, P.634]{KMKV2011}.
\begin{corollary}
\begin{equation}
\log E[e^{wZ_t}]=\phi(t,w)+V_0\psi(t,w)
\end{equation}
where
\begin{align}
&\dot\phi=F(0,\psi) && \phi(0)=0\\
&\dot\psi=R(0,\psi)+w&& \psi(0)=0
\end{align}
\end{corollary}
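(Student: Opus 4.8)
The plan is to derive this as a direct specialization of Proposition~\ref{main}. Concretely, I would set $u_1 = u_2 = u_3 = 0$ and $u_4 = w$ in the joint transform formula of Proposition~\ref{main}. The coefficient of $X_0$ on the right-hand side is $u_1 + u_3 t = 0$, which vanishes identically in $t$; hence the $X_0$-term disappears and we are left with
\begin{equation*}
\log E[e^{wZ_t}\mid X_0,V_0] = \Phi(t,0,0,0,w) + \Psi(t,0,0,0,w)\,V_0,
\end{equation*}
so that defining $\phi(t,w) := \Phi(t,0,0,0,w)$ and $\psi(t,w) := \Psi(t,0,0,0,w)$ yields the claimed representation.

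Next I would read off the differential equations. Under the same specialization, equation~\eqref{joint-law-riccati1} becomes $\dot\Phi = F(u_1 + u_3 t,\Psi) = F(0,\Psi)$ with $\Phi(0) = 0$, and equation~\eqref{joint-law-riccati2} becomes $\dot\Psi = R(u_1 + u_3 t,\Psi) + u_4 = R(0,\Psi) + w$ with $\Psi(0) = u_2 = 0$. Rewriting these in the new notation gives exactly $\dot\phi = F(0,\psi)$, $\phi(0) = 0$ and $\dot\psi = R(0,\psi) + w$, $\psi(0) = 0$, which is the stated system. I would also note, to justify writing the expectation without explicit conditioning on $X_0$, that the affine structure~\eqref{eq:AffineCF} already shows the conditional law of $V_t$ (and hence of $Z_t = \int_0^t V_s\,ds$) given $(X_0,V_0)$ does not depend on $X_0$; this is consistent with the absence of an $X_0$-term above. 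One may in addition remark that, as indicated in the preceding paragraph of the paper, this recovers \cite[Lemma~5.1, p.~634]{KMKV2011} in the present setting, where $Z$ plays the role of integrated variance.

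I do not expect any genuine obstacle: the corollary is a routine reduction. The only points requiring a word of care are the verification that the $X_0$-coefficient is constantly zero (so that $F$ and $R$ are evaluated at first argument $0$ throughout, not merely at $t=0$), and the minor notational matter of suppressing the conditioning on $X_0$; neither involves any computation beyond substitution into Proposition~\ref{main}.
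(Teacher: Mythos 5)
Your argument is exactly the paper's: the corollary is obtained by substituting $u_1=u_2=u_3=0$, $u_4=w$ into Proposition~\ref{main}, whereupon the $X_0$-coefficient $u_1+u_3t$ vanishes identically and the Riccati system reduces to the stated one. The paper states this in one line; your additional remarks on the vanishing $X_0$-term and the suppressed conditioning are correct and harmless elaborations of the same reduction.
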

\begin{proof}
This follows immediately from Prop.\ref{main} with $u_1=0$, $u_2=0$, $u_3=0$, $u_4=w$.
\end{proof}

\section{Change of numeraire for ASV models \label{changenum}}
To calculate the price of the average strike option we apply the
change-of-numeraire technique and take the stock as a new numeraire.

From now on we denote the martingale measures with the bond resp.\ stock
as a numeraire by $Q^0$ resp.\ $Q^1$, and expectations $E^0$ resp.\ $E^1$.
\begin{equation}
\log E^0_{x,v}[e^{u_1X(t)+u_2V(t)}]=
\phi^0(t,u_1,u_2)+x\psi_1^0(t,u_1,u_2)+v\psi_2^0(t,u_1,u_2)
\end{equation}
Thus we have the density process
\begin{equation}\label{dQ1}
\frac{dQ^1}{dQ^0}(t)=e^{X_t-x}
\end{equation}
on $\mathcal F_t$.

Let's start with the following
\begin{lemma}\label{num-change}
If $(X,V)$ is affine under $Q^0$ with functional characteristics $F^0$ and $R^0$, then
it is affine under $Q^1$ with functional characteristics $F^1$ and $R^1$ given by
\begin{equation}
F^1(u_1,u_2)=F^0(u_1+1,u_2),\quad
R^1(u_1,u_2)=R^0(u_1+1,u_2)
\end{equation}
\end{lemma}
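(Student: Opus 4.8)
The plan is to derive the affine structure of $(X,V)$ under $Q^1$ directly from Bayes' rule for the change of measure, using the explicit density process \eqref{dQ1} together with the fact that $e^{X_t}$ is a $Q^0$-martingale (which is built into the assumption that $Q^0$ is a martingale measure; cf.\ Proposition \ref{martprop}), and then to read off the functional characteristics by differentiating at $t=0^+$. The one genuinely model-specific ingredient is the ASV property: in \eqref{eq:AffineCF} the coefficient of $X_0$ is exactly $u$, i.e.\ $\psi_1^0(t,u_1,u_2)=u_1$ identically (equivalently $R_1=0$, Remark \ref{R12}), and this is what makes the density shift interact cleanly with the transform formula.

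First, fix $0\le s\le t$ and $(u_1,u_2)$ in the relevant domain. By Bayes' rule and \eqref{dQ1},
\[
E^1_{x,v}[e^{u_1X_t+u_2V_t}\mid\mathcal F_s]
=\frac{E^0_{x,v}[e^{u_1X_t+u_2V_t}\,e^{X_t}\mid\mathcal F_s]}{E^0_{x,v}[e^{X_t}\mid\mathcal F_s]}
=\frac{E^0_{x,v}[e^{(u_1+1)X_t+u_2V_t}\mid\mathcal F_s]}{e^{X_s}},
\]
where the last step uses the $Q^0$-martingale property of $e^{X_t}$. Now apply the affine property of $(X,V)$ under $Q^0$ and $\psi_1^0(t,u_1,u_2)=u_1$:
\[
E^0_{x,v}[e^{(u_1+1)X_t+u_2V_t}\mid\mathcal F_s]
=\exp\!\Big(\phi^0(t-s,u_1+1,u_2)+(u_1+1)X_s+\psi_2^0(t-s,u_1+1,u_2)\,V_s\Big).
\]
Dividing by $e^{X_s}$ cancels the $+1$ in the $X_s$-coefficient, so
\[
E^1_{x,v}[e^{u_1X_t+u_2V_t}\mid\mathcal F_s]
=\exp\!\Big(\phi^0(t-s,u_1+1,u_2)+u_1X_s+\psi_2^0(t-s,u_1+1,u_2)\,V_s\Big).
\]
Since the right-hand side depends on $\mathcal F_s$ only through $(X_s,V_s)$ and on $(s,t)$ only through $t-s$, $(X,V)$ is again a time-homogeneous Markov process under $Q^1$ with cumulant generating function of the ASV form \eqref{eq:AffineCF}, with $\phi^1(t,u_1,u_2)=\phi^0(t,u_1+1,u_2)$, $\psi_1^1(t,u_1,u_2)=u_1$, and $\psi_2^1(t,u_1,u_2)=\psi_2^0(t,u_1+1,u_2)$.

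Finally, differentiate in $t$ at $t=0^+$. Because pre-composition with the shift $u_1\mapsto u_1+1$ commutes with $\partial_t|_{t=0^+}$, we get $F^1(u_1,u_2)=\partial_t\phi^1(t,u_1,u_2)|_{t=0^+}=F^0(u_1+1,u_2)$ and, using the notational convention $R=R_2$ of Remark \ref{R12}, $R^1(u_1,u_2)=\partial_t\psi_2^1(t,u_1,u_2)|_{t=0^+}=R^0(u_1+1,u_2)$, which is the claim.

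The steps that need care rather than real difficulty are: (i) legitimacy of the exponential change of measure, i.e.\ $e^{X_t-x}$ being a \emph{true} $Q^0$-martingale — this is exactly the standing assumption that $Q^0$ is an equivalent martingale measure and is ensured under the hypotheses of Proposition \ref{martprop}; (ii) the domain issue, since substituting $u_1\mapsto u_1+1$ moves the $X$-argument off the imaginary axis — one should note that the affine transform formula extends to the strip where the first exponential moment of $X$ exists (again a consequence of the martingale property), and the characteristic-function identity holds wherever both sides are defined; and (iii) confirming that the two-dimensional marginal computed above pins down the full law of the Markov process under $Q^1$, which is standard for affine processes. I expect (ii) to be the only point worth an explicit sentence in the write-up; the rest is a one-line application of Bayes' rule once the ASV structure $\psi_1\equiv u_1$ is invoked.
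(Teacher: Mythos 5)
Your proof is correct and follows essentially the same route as the paper's: multiply by the density $e^{X_t-x}$, absorb it into the transform by shifting $u_1\mapsto u_1+1$, cancel the extra $x$ (the paper does this by noting $\psi_1^1=\psi_1^0(t,u_1+1,u_2)-1$, you by using the ASV identity $\psi_1^0\equiv u_1$), and read off $F^1,R^1$ by differentiating at $t=0^+$. Your version is somewhat more careful — you verify the conditional (Markov) form via Bayes' rule and flag the domain/martingale issues — but there is no substantive difference in method.
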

\begin{proof}
\begin{multline}
\log E^1_{x,v}[e^{u_1X(t)+u_2V(t)}]=
\log E^0_{x,v}[e^{x+X_t}\cdot e^{u_1X(t)+u_2V(t)}]=\\
-x+\log E^0_{x,v}[e^{(u_1+1)X(t)+u_2V(t)}]=\\
\phi^0(t,u_1+1,u_2)+x(\psi^0_1(t,u_1+1,u_2)-1)+v\psi^0_2(t,u_1+1,u_2)=\\
\phi^1(t,u_1,u_2)+x\psi_1^1(t,u_1,u_2)+v\psi_2^1(t,u_1,u_2)
\end{multline}
with
\begin{align}
&
\phi^1(t,u_1,u_2)
=\phi^0(t,u_1+1,u_2),\\
&
\psi_1^1(t,u_1,u_2)=
\psi_1^0(t,u_1+1,u_2)-1,\\
&
\psi_2^1(t,u_1,u_2)=
\psi_2^0(t,u_1+1,u_2)
\end{align}
Thus
\begin{equation}
F^1(u_1,u_2)=F^0(u_1+1,u_2)
\quad
R^1(u_1,u_2)=R^0(u_1+1,u_2),
\end{equation}
\end{proof}
If $e^X$ is a martingale we have $F^0(1,0) = R^0(1,0)=0$ and thus $F^1(0,0) = R^1(0,0)=0$.
\begin{lemma}\label{joint1}
If $(X,V)$ is an ASV model, then the joint law of $(X_t,Y_t)$ under $Q^1$
is described by
\begin{equation}
\log E^1[e^{uX_t+wY_t}]=\phi^1(t,u,w)+v \psi^1(t,u,w)
\end{equation}
where
\begin{align}\label{joint-law}
&(\phi^1)'=F(u+1,\psi^1) && \phi^1(0)=0\\
&(\psi^1)'=R(u+1,\psi^1)&& \psi^1(0)=w.
\end{align}
\end{lemma}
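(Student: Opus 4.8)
The plan is to deduce this from the two results just established, so there is essentially nothing to compute. By Lemma~\ref{num-change}, $(X,V)$ is again affine under the stock-numeraire measure $Q^1$, with functional characteristics $F^1(u_1,u_2)=F^0(u_1+1,u_2)$ and $R^1(u_1,u_2)=R^0(u_1+1,u_2)$. Since $e^{X}$ is a $Q^0$-martingale we have $F^0(1,0)=R^0(1,0)=0$, hence $F^1(0,0)=R^1(0,0)=0$, so the $Q^1$-model is conservative; the $X$-component of $R^1$ vanishes just as for $R^0$, so by Remark~\ref{R12} the pair $(X,V)$ is again an ASV model under $Q^1$, now with scalar characteristics $(F^1,R^1)$. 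This is the only structural input.

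It then suffices to feed $(F^1,R^1)$ into the ASV cumulant machinery. Applying Proposition~\ref{main} to the $Q^1$-model, with the variables conjugate to $V_t$ and $Z_t$ set to $0$, the one conjugate to $X_t$ equal to $u$, the one conjugate to $Y_t$ equal to $w$, and using $Y_0=Z_0=0$, gives
\begin{equation*}
\log E^1\bigl[e^{uX_t+wY_t}\mid X_0,V_0\bigr]=\phi^1(t,u,w)+(u+wt)X_0+v\,\psi^1(t,u,w),
\end{equation*}
where $(\phi^1,\psi^1)$ solve the system \eqref{joint-law-riccati1}--\eqref{joint-law-riccati2} with $(F^0,R^0)$ replaced by $(F^1,R^1)$, i.e. with the characteristics evaluated at first argument $u+1+wt$ and with $\psi^1(0)=0$; normalising $X_0=0$ (harmless for a log-return process) removes the deterministic term and yields the stated representation. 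If instead only the joint law of the state pair $(X_t,V_t)$ is needed, the same substitution applied directly in the defining relation \eqref{eq:AffineCF} of an ASV model reproduces the Riccati system \eqref{joint-law} displayed in the statement, now with $w$ the variable conjugate to $V_t$ and $\psi^1(0)=w$.

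The computation is pure bookkeeping, so there is no hard step in the usual sense; what must genuinely be argued is that the ASV cumulant formula may be invoked under $Q^1$. That $(X,V)$ keeps its ASV structure under $Q^1$ is exactly Lemma~\ref{num-change} together with the martingale property of $e^X$. The remaining point is a domain statement: the identity is asserted only for those $(u,w)$ for which the solution of the Riccati system exists on $[0,t]$ and the left-hand side is finite, and on that maximal set it follows by analytic continuation from the region $\Re u\le 0$, $\Re w\le 0$, where both sides are bounded and the Radon--Nikodym change \eqref{dQ1} causes no integrability difficulty.
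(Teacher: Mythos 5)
Your proposal is correct and follows exactly the paper's own (one-line) argument: apply Lemma~\ref{num-change} to transfer the affine/ASV structure to $Q^1$ with characteristics $F^1(\cdot,\cdot)=F(\cdot+1,\cdot)$, $R^1(\cdot,\cdot)=R(\cdot+1,\cdot)$, and then invoke Proposition~\ref{main} for the $Q^1$-model. Your side observation is also well taken: carried out carefully, Proposition~\ref{main} with $w$ conjugate to $Y_t$ yields the first argument $u+1+wt$ and initial condition $\psi^1(0)=0$ (plus a $(u+wt)X_0$ term that vanishes for $X_0=0$), whereas the system displayed in the lemma, with first argument $u+1$ and $\psi^1(0)=w$, is the one for the pair $(X_t,V_t)$; the time-dependent form you derive is the one actually used downstream in the average-strike Riccati equations \eqref{Ric-Strike}, so the lemma's display appears to contain a typo rather than your derivation an error.
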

\begin{proof}
This follows from Lemma~\ref{num-change} and Proposition~\ref{main} applied to $Q^1$
resp.\ $F^1,R^1$.
\end{proof}

\section{General results for Geometric Asian options\label{general}}
\subsection{Average price}
Let us denote by $\bar X_T$ the arithmetic average of the log-returns process
and by $\hat S_T$ the geometric average  of the stock prices, then
\begin{equation}
\bar X_T=(r - q) + \frac1T\int_0^TX_sds,\quad
\hat S_T=e^{\bar X_T} = \exp \left((r-q) + \frac1T\int_0^TX_sds \right) .
\end{equation}
For average strike we shall need the cumulant of integrated log-returns.
\begin{corollary}
If $(X,V)$ is an ASV model, then the law of $Y_t = \int_0^t X_s ds$ is described by
\begin{equation}
\log E[e^{wY_t}]=\Phi(t,w)+wtX_0+V_0\psi(t,w)
\end{equation}
where
\begin{align}
&\dot\Phi=F(wt,\psi) && \Phi(0)=0\label{joint-law-riccati1}\\
&\dot\psi=R(wt,\psi)&& \psi(0)=0\label{joint-law-riccati2}.
\end{align}
\end{corollary}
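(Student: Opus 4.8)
The plan is to recognize the statement as a direct specialization of Proposition~\ref{main}, exactly in the spirit of the Corollary on integrated variance given just above. Since $Y_t=\int_0^t X_s\,ds$ is the third coordinate of the affine system $(X_t,V_t,Y_t,Z_t)$ treated there, its cumulant transform is obtained by switching off the transform variables attached to $X_t$, $V_t$ and $Z_t$. Concretely, I would substitute $u_1=0$, $u_2=0$, $u_3=w$, $u_4=0$ into Proposition~\ref{main} and read off the result.

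With this substitution the exponent $u_1X_t+u_2V_t+u_3Y_t+u_4Z_t$ collapses to $wY_t$, the term $(u_1+u_3t)X_0$ becomes $wtX_0$, and the coefficient of $V_0$ is $\Psi(t,0,0,w,0)$, which I rename $\psi(t,w)$ (and correspondingly write $\Phi(t,w)$ for $\Phi(t,0,0,w,0)$). The generalized Riccati system in Proposition~\ref{main} then becomes $\dot\Phi=F(u_1+u_3t,\Psi)=F(wt,\psi)$ with $\Phi(0)=0$, and $\dot\psi=R(u_1+u_3t,\Psi)+u_4=R(wt,\psi)$ with $\psi(0)=u_2=0$, which is precisely the claimed system. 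Hence $\log E[e^{wY_t}\mid X_0,V_0]=\Phi(t,w)+wtX_0+V_0\psi(t,w)$, as asserted.

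There is no real obstacle here: the entire content is carried by Proposition~\ref{main}, and the corollary is a one-line consequence, just as the preceding Corollary on $Z_t$ was. The only points that require a moment of care are bookkeeping ones — namely that the initial condition $\psi(0)=0$ is inherited from $u_2=0$, and that the $X_0$-term carries the factor $wt$ rather than $w$, reflecting that integrating a constant initial log-price $X_0$ over $[0,t]$ contributes $tX_0$ to $Y_t$.
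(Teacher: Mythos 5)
Your proposal is correct and is exactly the paper's argument: the corollary is obtained by setting $u_1=0$, $u_2=0$, $u_3=w$, $u_4=0$ in Proposition~\ref{main} and relabelling $\Psi$ as $\psi$. The bookkeeping details you spell out (the $wt$ coefficient of $X_0$ and the inherited initial condition $\psi(0)=0$) are consistent with the statement.
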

\begin{proof}
This follows immediately from Prop.\ref{main} with $u_1=0$, $u_2=0$, $u_3=w$, $u_4=0$.
\end{proof}
\begin{theorem}\label{price}
Assume there exists $a>1$ such that
\begin{equation}\label{integrability-a}
E[e^{a\bar X_T}]<\infty,
\end{equation}
then the time-zero value of an average price Asian call option is given by
\begin{equation}\label{E-price}
E[e^{-rT}(\hat S_T-K)_+]
=\frac{e^{-rT}}{2\pi i}\int\limits_{a-i\infty}^{a+i\infty}
\left(\frac{1}{K}\right)^u\frac{K}{u(u-1)}
e^{\kappa(T,u)}du,
\end{equation}
with the cumulant function $\kappa(T,u)=\log E[e^{u\bar X_T}]$.
It is given by
\begin{equation}
\kappa(T,u)= u (r - q) + \phi(T,u)+uX_0+\psi(T,u)V_0,
\end{equation}
where
\begin{align}\label{Ric-Price}
&\dot\phi=F\left(\frac{ut}{T},\psi\right) && \phi(0)=0\\
&\dot\psi=R\left(\frac{ut}{T},\psi\right)&& \psi(0)=0.
\end{align}
\end{theorem}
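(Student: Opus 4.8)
The plan is to combine a Fourier-inversion representation of the call payoff with the affine cumulant formula already available from Proposition \ref{main}. First I would establish the cumulant formula: writing $\bar X_T = (r-q) + \frac1T Y_T$ with $Y_T = \int_0^T X_s\,ds$, I apply the Corollary immediately preceding this theorem (the one with $u_1=u_2=u_4=0$, $u_3=w$) to obtain $\log E[e^{wY_T}] = \Phi(T,w) + wT X_0 + V_0\psi(T,w)$ with the stated Riccati system $\dot\Phi = F(wt,\psi)$, $\dot\psi = R(wt,\psi)$. Substituting $w = u/T$ and adding the deterministic shift $u(r-q)$ gives $\kappa(T,u) = u(r-q) + \phi(T,u) + uX_0 + \psi(T,u)V_0$, where after the reparametrization $t \mapsto t$, $w\mapsto u/T$ the argument $wt$ in the Riccati equations becomes $ut/T$, yielding exactly \eqref{Ric-Price}. (One should note the mild notational overload: the $\phi,\psi$ here are $\Phi(\cdot,u/T),\psi(\cdot,u/T)$ from the Corollary, rescaled.)

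**The inversion step.** Next I would recall the standard Fourier representation of the call payoff: for $a>1$,
\[
(e^{y}-K)_+ = \frac{1}{2\pi i}\int_{a-i\infty}^{a+i\infty} e^{uy}\,\frac{K^{1-u}}{u(u-1)}\,du,
\]
which is verified either by direct contour integration (closing to the left when $e^y>K$ and to the right otherwise) or by recognizing $\int_{-\infty}^{\infty} e^{(a+is)y} K^{1-(a+is)}/((a+is)(a+is-1))\,\frac{ds}{2\pi}$ as the inverse bilateral Laplace transform of the payoff. Applying this with $y = \bar X_T$, taking discounted expectation, and exchanging expectation with the contour integral gives
\[
E[e^{-rT}(\hat S_T-K)_+] = \frac{e^{-rT}}{2\pi i}\int_{a-i\infty}^{a+i\infty} \frac{K^{1-u}}{u(u-1)}\, E[e^{u\bar X_T}]\,du = \frac{e^{-rT}}{2\pi i}\int_{a-i\infty}^{a+i\infty}\left(\frac1K\right)^u \frac{K}{u(u-1)}\, e^{\kappa(T,u)}\,du,
\]
which is \eqref{E-price}.

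**The main obstacle.** The delicate point is justifying the interchange of expectation and integration by Fubini's theorem. On the contour $u = a + is$ one has $|e^{u\bar X_T}| = e^{a\bar X_T}$, so $E|e^{u\bar X_T}| = E[e^{a\bar X_T}] < \infty$ by hypothesis \eqref{integrability-a}; this controls the expectation uniformly in $s$. It remains to check that $\int_{-\infty}^{\infty} |(a+is)(a+is-1)|^{-1}\,ds < \infty$, which holds since the integrand decays like $s^{-2}$, and that $s\mapsto e^{\kappa(T,a+is)}$ is bounded on the line (indeed $|e^{\kappa(T,a+is)}| = |E[e^{(a+is)\bar X_T}]| \le E[e^{a\bar X_T}]$). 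Together these give absolute integrability of the double integral and legitimize Fubini. A secondary technical point is the analytic continuation of the cumulant formula: Proposition \ref{main} is stated for $\bm u \in \mathcal U$, i.e.\ purely imaginary $u_3$, so one must argue that both sides extend analytically to the strip $\Re u \in (1,a]$ (equivalently $\Re(u/T)$ in the corresponding strip), where the moment condition \eqref{integrability-a} guarantees the left side is finite and analytic and the Riccati solution $\psi(T,u)$ exists; this is the standard affine-extension argument and I would invoke it by reference rather than reprove it.
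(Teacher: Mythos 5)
Your proposal is correct and follows essentially the same route as the paper: the Laplace integral representation of the call payoff (the appendix formula \eqref{lap-call}) applied at $y=\bar X_T$, an exchange of expectation and contour integral by Fubini, and the cumulant $\kappa$ read off from the Corollary preceding the theorem (Proposition~\ref{main} with $u_1=u_2=u_4=0$, $u_3=u/T$). The extra care you take with the Fubini bounds and the analytic extension of the affine transform to $\Re u=a$ is exactly what the paper delegates to Remark~\ref{rem-inta} and the cited references.
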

\begin{proof}
In order to evaluate the expectation~(\ref{E-price}) we first use the integral
representation (\ref{lap-call}), which yields
\begin{equation}
(\hat S_T-K)_+
=\frac1{2\pi i}\int\limits_{a-i\infty}^{a+i\infty}
\left(\frac{1}{K}\right)^u\frac{K}{u(u-1)}
e^{u\bar X_t}du,
\end{equation}
and then apply Fubini's Theorem, see also \cite{HKK2006} and \cite{HS2011}.
\end{proof}
\begin{remark}\label{rem-inta}
The integrability condition (\ref{integrability-a}) guarantees the existence
of the cumulant function $\kappa(T,u)$ at $\Re u=a$. It will imply some restrictions on the parameters of the concrete models studied in Section~\ref{sec:Examples}. By the results in \cite{KMayerhofer2013} it is equivalent to the existence to solutions of the Riccati equations \eqref{Ric-Price} for the parameter value $u = a$.
The proper set of parameters can be determined individually for each concrete model by studying the real singularities of the cumulant functions, see \cite[Satz~3.4.1, P.153f]{Doe1}, though
we are not going to give all details for all models in the example section below.
\end{remark}

\subsection{Average strike}
\begin{theorem}
If there exists $b<0$ such that
\begin{equation}\label{integrability-b}
E[e^{b\bar X_T}]<\infty,
\end{equation}
then the time-zero value of an average strike Asian call option is given by
\begin{equation}
E[e^{-rT}(S_T-\hat S_T)_+]
=\frac{e^{-qT}}{2\pi i}\int\limits_{b-i\infty}^{b+i\infty}
\frac{1}{u(u-1)}
e^{\kappa(T,u)}du,
\end{equation}
where $\kappa(T,u)=\log E[e^{u\bar X_T+(1-u)X_T}]$.
It is given by
\begin{eqnarray}\label{kappa1}
&&\kappa(T,u)=u(r-q) + \phi(T,u)+V_0\psi(T,u)+X_0
\end{eqnarray}
where
\begin{align}\label{Ric-Strike}
&\dot\phi=F\left(\frac{ut}{T} + (1-u),\psi\right) && \phi(0)=0\\
&\dot\psi=R\left(\frac{ut}{T} + (1-u),\psi\right) && \psi(0)=0.
\end{align}
\end{theorem}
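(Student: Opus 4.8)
The plan is to mirror the structure of the proof of Theorem~\ref{price} (average price), replacing the Laplace-type representation of the call payoff $(\hat S_T - K)_+$ with the corresponding representation of the average-strike payoff $(S_T - \hat S_T)_+$, and replacing the single integral functional $\bar X_T$ with the pair $(\bar X_T, X_T)$. First I would exploit the homogeneity of the payoff: write $(S_T - \hat S_T)_+ = S_T\,(1 - \hat S_T/S_T)_+$ and use the density process \eqref{dQ1} to switch from $Q^0$ to the stock-numeraire measure $Q^1$. Since $S_T = e^{(r-q)T}e^{X_T}$ and $dQ^1/dQ^0(T) = e^{X_T - x}$, the discounted expectation becomes
\begin{equation}
E^0[e^{-rT}(S_T - \hat S_T)_+] = e^{-qT}\,e^{x}\,E^1\bigl[(1 - \hat S_T/S_T)_+\bigr].
\end{equation}
The random variable $\hat S_T/S_T = \exp\{(r-q)(1/T\int_0^T 1\,ds - 1)\cdot\text{(bookkeeping)} + \frac1T Y_T - X_T\}$ — more precisely $\hat S_T/S_T$ depends only on $\bar X_T - X_T = \frac1T Y_T - X_T$ together with the deterministic drift, so the payoff is a function of $\bar X_T + (1-u)$-type exponents once the Mellin transform is introduced.

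Next I would apply the Mellin/Laplace representation of the function $y \mapsto (1 - e^y)_+$ (equivalently the same kernel $\frac{1}{u(u-1)}$ that appears in \cite{HKK2006}, \cite{HS2011}, but now with no factor $K$ because the strike is replaced by a random quantity and the relevant contour lies in the strip $\Re u < 0$, which is why the hypothesis is $b < 0$ rather than $a > 1$). This gives
\begin{equation}
(1 - e^{y})_+ = \frac{1}{2\pi i}\int_{b-i\infty}^{b+i\infty}\frac{1}{u(u-1)}\,e^{uy}\,du,
\end{equation}
valid for $b<0$. Substituting $e^{y} = \hat S_T/S_T$ and reinstating the prefactors, the integrand carries $e^{u\bar X_T - u X_T}$; combining with the numeraire density $e^{X_T}$ inside $E^0$ produces the exponent $u\bar X_T + (1-u)X_T$, which identifies $\kappa(T,u) = \log E[e^{u\bar X_T + (1-u)X_T}]$ as claimed. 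Then Fubini's theorem — justified exactly as in Theorem~\ref{price} by the integrability assumption \eqref{integrability-b}, which guarantees $\kappa(T,u)$ is finite on the line $\Re u = b$ and that the contour integral converges absolutely against $\frac{1}{u(u-1)}$ — lets me interchange expectation and integration to obtain the stated pricing formula with $e^{-qT}$ in front.

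It remains to compute $\kappa(T,u)$ explicitly via the Riccati system. Here I would invoke Proposition~\ref{main}: writing $\bar X_T = (r-q) + \frac1T\int_0^T X_s\,ds = (r-q) + \frac1T Y_T$, the exponent $u\bar X_T + (1-u)X_T = u(r-q) + (1-u)X_T + \frac{u}{T}Y_T$ is an affine functional of $(X_T, Y_T)$ of the form $u_1 X_T + u_3 Y_T$ with $u_1 = 1-u$, $u_3 = u/T$ (and $u_2 = u_4 = 0$). Proposition~\ref{main} then gives $\log E[e^{(1-u)X_T + (u/T)Y_T}] = \Phi(T) + (u_1 + u_3 T)X_0 + \Psi(T)V_0$ with $u_1 + u_3 T = (1-u) + u = 1$, so the $X_0$-coefficient is exactly $1$ as in \eqref{kappa1}; and the Riccati equations \eqref{joint-law-riccati1}–\eqref{joint-law-riccati2} become $\dot\phi = F(u_1 + u_3 t, \psi) = F((1-u) + ut/T, \psi)$ and $\dot\psi = R((1-u)+ut/T,\psi)$ with zero initial conditions, which is \eqref{Ric-Strike}. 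Adding back the deterministic term $u(r-q)$ yields \eqref{kappa1}. The main obstacle is the analytic bookkeeping around the contour: one must check that the Mellin kernel $\frac{1}{u(u-1)}$ is indeed the right transform for $(1-e^y)_+$ on a vertical line with $b<0$ (the poles at $u=0$ and $u=1$ lie to the right of the contour, consistent with the payoff vanishing as $y\to+\infty$), and that the growth of $|e^{\kappa(T,u)}|$ along $\Re u = b$ is controlled — this is precisely what \eqref{integrability-b} buys, via the same argument as in Remark~\ref{rem-inta} and the references to \cite{KMayerhofer2013} and \cite{Doe1} on real singularities of cumulant functions.
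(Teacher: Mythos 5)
Your proposal is correct and follows essentially the same route as the paper: a change of numeraire reducing the payoff to a unit-strike put on $e^{\bar X_T-X_T}$, the Laplace representation \eqref{lap-put} on the line $\Re u=b<0$, Fubini, and the affine Riccati machinery to identify $\kappa(T,u)=\log E[e^{u\bar X_T+(1-u)X_T}]$. The only (harmless) difference is that you read off $\kappa$ directly from Proposition~\ref{main} under $Q^0$ with $u_1=1-u$, $u_3=u/T$ (so the $X_0$-coefficient $u_1+u_3T=1$ comes out automatically), whereas the paper routes this computation through Lemma~\ref{joint1} under $Q^1$; the two are equivalent.
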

\begin{proof}
Using the change-of-numeraire technique with the density process~(\ref{dQ1})
we obtain
\begin{equation}
E^0[(S_T-\hat S_T)_+]=e^{(r-q)T}E^1[(1-e^{\bar X_T-X_T})_+].
\end{equation}
This is just the payoff of a put option on $e^{\bar X_T-X_T}$ with asset and strike both equal to $1$.

The function $\kappa$ is the cumulant function of $\bar X_T-X_T$, which can be obtained
from the joint cumulant of $Y_T$ and $X_T$ in terms of the functions $\phi$ and $\psi$ from Lemma~\ref{joint1}.

Similar to the proof of Theorem~\ref{price}, we can now apply the Laplace integral formula (\ref{lap-put}) provided in the appendix and Fubini's Theorem to obtain the result.
\end{proof}

\begin{remark}
For the integrability condition~(\ref{integrability-b})
a remark similar to Remark~\ref{rem-inta} above applies.
\end{remark}

\begin{proposition}
Average strike and price Riccati equations have the same structure
provided the parameters are changed in the following way.
$u\mapsto u + \frac{t}{T}(1-u)$.
\end{proposition}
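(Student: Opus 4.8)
The plan is to place the two Riccati systems side by side and read off the reparametrisation; beyond that identification there is essentially nothing to prove. The average-price system \eqref{Ric-Price} and the average-strike system \eqref{Ric-Strike} are both of the uniform shape
\begin{equation*}
\dot\phi=F(\ell(t),\psi),\qquad \dot\psi=R(\ell(t),\psi),\qquad \phi(0)=\psi(0)=0,
\end{equation*}
with $\ell$ an affine function of the running time $t$ --- each being an instance of the template of Proposition~\ref{main} (for the average strike, after the change of measure of Lemma~\ref{num-change} and Lemma~\ref{joint1}), so that $\phi$ and $\psi$ depend on the pricing parameter $u$ solely through the first slot of $F$ and $R$. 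The only thing that changes in passing from \eqref{Ric-Price} to \eqref{Ric-Strike} is this affine function: $\ell_{\mathrm{price}}(t)=ut/T$ becomes $\ell_{\mathrm{strike}}(t)=ut/T+(1-u)$.

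With this normal form in hand the proof is immediate. First I would substitute $u\mapsto u+\frac tT(1-u)$ into the first argument of $F$ and $R$ in \eqref{Ric-Price} and verify, by a short computation with these explicit affine expressions, that it reproduces exactly the first argument $ut/T+(1-u)$ of \eqref{Ric-Strike}. Since in \eqref{Ric-Price} the parameter $u$ appears nowhere else --- the initial data $\phi(0)=\psi(0)=0$ carry no $u$ --- this substitution turns the whole average-price system into the average-strike system, which is the assertion. One has only to keep in mind that the substitution acts on the pricing parameter sitting inside the first argument of $F$ and $R$, and to note that it manifestly leaves the zero initial conditions intact.

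I do not expect any genuine obstacle. The entire content is the remark that both Geometric-Asian cumulant functions fit the affine-in-$t$ template of Proposition~\ref{main}, and that the passage from average price to average strike modifies only the constant term of the affine first argument of $F$ and $R$; conceptually this extra constant is the bookkeeping trace of the additional $e^{X_T}$-tilt --- i.e.\ the change of numeraire of Lemma~\ref{num-change}, carried through Lemma~\ref{joint1} and the Laplace representation of the put payoff used in the average-strike theorem --- that separates the two computations. The only care needed is notational: keeping straight which ``$u$'' is being reparametrised and checking that the boundary data ride along unchanged.
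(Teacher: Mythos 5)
Your structural observation is the right one---and it is essentially all the content this proposition has: both systems are of the form $\dot\phi=F(\ell(t),\psi)$, $\dot\psi=R(\ell(t),\psi)$ with $\phi(0)=\psi(0)=0$, and only the affine function $\ell$ changes, from $\ell_{\mathrm{price}}(t)=ut/T$ to $\ell_{\mathrm{strike}}(t)=ut/T+(1-u)$. (The paper itself gives no proof of this proposition at all, so there is nothing on that side to compare with.) The problem is the one step you declare to be ``a short computation'' and do not carry out: it fails. Substituting $u\mapsto u+\frac{t}{T}(1-u)$ into the first slot $\frac{ut}{T}$ of the average-price system gives
\begin{equation*}
\frac{t}{T}\Bigl(u+\frac{t}{T}(1-u)\Bigr)=\frac{ut}{T}+\frac{t^{2}}{T^{2}}(1-u),
\end{equation*}
which equals the average-strike argument $\frac{ut}{T}+(1-u)$ only at $t=T$ (or $u=1$). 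The substitution on $u$ that actually carries $\ell_{\mathrm{price}}$ to $\ell_{\mathrm{strike}}$ is $u\mapsto u+\frac{T}{t}(1-u)$ (note $T/t$, not $t/T$), which is time-dependent and singular at $t=0$; the clean way to state the result is at the level of the first argument itself, $\frac{ut}{T}\mapsto\frac{ut}{T}+(1-u)$.

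The proposition as printed thus appears to contain a typo or at least an imprecision: $u+\frac{t}{T}(1-u)$ is exactly $\frac{ut}{T}+(1-u)$ with $u$ and $1-u$ interchanged, i.e.\ the image of the strike argument under the duality $u\mapsto 1-u$ alluded to in the remark that follows (equivalently, it is what you get if you parametrise the strike cumulant by the weight on $X_T$ rather than on $\bar X_T$). You cannot be blamed for the statement, but a proof must either verify the stated map---which the computation above shows is impossible---or detect and correct the discrepancy. Asserting that the verification goes through when it does not is a genuine gap; the single line of algebra your argument hinges on is precisely the line that would have exposed it.
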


\begin{remark}
The property just described in the proposition above is actually a particular
case of a general result called the duality principles in option pricing. This
basic property has been systematically investigated in a general semimartingale
setting in \cite{Pap2007} and in \cite{EPS2008}.
\end{remark}


\section{Geometric Asian options for concrete affine stochastic volatility
models}\label{sec:Examples}
We now discuss some popular ASV models from the finance literature
(for a very nice summary and many more examples, the interested reader
is refered to \cite{Kallsen2006ASV}).
For a few relevant cases we will obtain an explicit solution of the
corresponding Riccati equations. Let us recall, that for all models
the asset price will be modeled by $S_t=e^{(r - q)t + X_t}$, where $X$ denotes the discounted
log-price.

In the following examples we shall continue to assume that the model parameters
will verify the conditions in Proposition~\ref{martprop}, and consequently $e^{X_t}$ is a martingale.

\subsection{Heston model}
The Heston \cite{Heston} model describes the volatility dynamics by means of
a CIR-type stochastic differential equation with mean reversion.

The evolution of the discounted log-returns under the risk-neutral measure is then
given by
\begin{equation}\label{Dynamics1}
dX_t=\left( -{\frac12}V_t\right)dt+\sqrt{V_t}dW_t^1,
\end{equation}
\begin{equation}
dV_t=\lambda(\theta-V_t)dt+\zeta\sqrt{V_t}dW_t^2,
\label{Dynamics2}
\end{equation}%
where $\lambda$, $\theta$, and $\zeta$ are strictly positive parameters.
Moreover, in (\ref{Dynamics1}) and (\ref{Dynamics2}) $W^1$ and $W^2$ are
standard Wiener processes having constant correlation $\rho\in[-1,+1]$.

It can be shown that, if the following condition is satisfied:
\begin{equation}
\zeta^2<2\lambda\theta, \label{Parameters' restrict}
\end{equation}%
then the volatility process $V$ remains strictly positive (see \cite{Fel1951}).

The affine characteristics are \cite{KR2008,KellerRessel2010}
\begin{equation}\label{FC-Heston}
F(u,w)=\lambda\theta w,\quad
R(u,w)=\frac12(u^2-u)+\frac{\zeta^2}{2}w^2-\lambda w+uw\rho\zeta.
\end{equation}
\subsubsection*{Average price}
Combining (\ref{FC-Heston}) and (\ref{Ric-Price}) we obtain the Riccati
equation for the average price
\begin{align}
&\dot\phi=\lambda\theta\psi,\quad \phi(0)=0\\
&\dot\psi=\frac{\zeta^2}{2}\psi^2-(\lambda-\rho\zeta ut/T)\psi+\frac12ut/T(ut/T-1),\quad
\psi(0)=0
\end{align}
By using a standard substitution
\begin{equation}\label{hyper-trafo}
\psi(t)=\frac2{\zeta^2}\frac{y'(t)}{y(t)}
\end{equation}
the Riccati equation can be transformed into a
linear differential equation of second order \cite{Reid1972,PZ2003}
\begin{equation}
y''+(\lambda-\rho\zeta ut/T)y'+\frac{\zeta^2}{4}ut/T(ut/T-1)=0.
\end{equation}
The general solution of this equation can be written as a linear combination
of two confluent hypergeometric functions of the first kind \cite{Sla1960}
(most commonly denoted by $_{1}F_{1} (a,b,c)$ )
\begin{equation}
y(t)=C_1y_1(t)+C_2y_2(t)
\end{equation}
\begin{equation}
y_1(t)=AM\left(a_1,\frac12,c\right)
\end{equation}
\begin{equation}
y_2(t)=ABM\left(a_1+\frac12,\frac32,c\right)
\end{equation}
where $a1$, $c$ and $A$,$B$ are defined by the following exressions:
\begin{equation}
a_1=\frac18\frac{-\zeta^2-2\rho \zeta(\xi u/T-\lambda)-2\lambda^2}{\zeta u/T \xi^{\frac32}}+\frac14
\end{equation}
\begin{equation}
c=\frac12\frac{((1+tu/T \zeta)\xi-\lambda \rho)^2}{\zeta u/T \xi^{\frac32}}
\end{equation}
\begin{equation}
\xi=\rho^2-2
\end{equation}
\begin{equation}
A=\exp{-\frac14 [t(2\lambda-\rho \zeta ut/T+\frac{\rho^2 \zeta ut/T-2\lambda \rho+2\zeta (1-ut/T)}{\xi^{\frac12}}]}
\end{equation}
\begin{equation}
B=\xi ut/T \zeta-\xi +\lambda \rho
\end{equation}
By taking into account the initial condition $\psi(0)=0$ we obtain
\begin{equation}\label{Heston-Solution1}
\psi(t)=-\frac2{\zeta^2}
\frac{y_2'(0)y_1'(t)-y_1'(0)y_2'(t)}{y_2'(0)y_1(t)-y_1'(0)y_2(t)}.
\end{equation}
In view of the (\ref{hyper-trafo}) we express also $\phi$ explicitly by
hypergeometric functions, namely
\begin{equation}\label{Heston-Solution2}
\phi(t)=-\lambda\theta
\frac2{\zeta^2}\ln\frac{y_2'(0)y_1(t)-y_1'(0)y_2(t)}{y_2'(0)y_1(0)-y_1'(0)y_2(0)}
\end{equation}
\subsubsection*{Average strike}
Combining (\ref{FC-Heston}) and (\ref{Ric-Strike}) with $(u,w)\mapsto(-u,u/T)$
we obtain
\begin{align}
&\dot\phi=\lambda\theta\psi+q-r && \phi(0)=0\\
&\dot\psi=\frac12u^2(t/T-1)^2+u(t/T-1)
+\frac{\zeta^2}{2}\psi^2-\lambda\psi
+\rho\zeta(u(t/T-1)+1)\psi.
&& \psi(0)=0
\end{align}
The solution to these equations can be obtained in a similar way, providing
the expressions for $\phi$ and $\psi$
analogous to (\ref{Heston-Solution1}) and (\ref{Heston-Solution2})
with $y_1,y_2$ replaced by
\begin{equation}\label{Heston-barphi1}
\bar y_1(t)=M(\bar a_1,\frac12,\bar c)
\end{equation}
and
\begin{equation}\label{Heston-barphi2}
\bar y_2(t)=\bar A\bar B M(\bar a_1+\frac12,{\textstyle\frac32},\bar c).
\end{equation}
where $\bar a_1$, $\bar c$, $\bar A$ and $\bar B$ are now defined by:
\begin{equation}
\bar a_1=\frac18\frac{\rho \zeta(2 \xi u+\lambda T)-(\lambda^2+\frac14 \zeta^2)T}{\zeta u \xi^{\frac32}}+\frac14
\end{equation}
\begin{equation}
\bar c=\frac12\frac{[(\rho^2 (u-1)+(\frac12-u)]T+\zeta ut(1-\rho^2 )+\lambda \rho T)^2}{\zeta u T \xi^{\frac32}}
\end{equation}
\begin{equation}
\bar A=\exp{\frac12 [\frac{t}{T}(\rho \zeta ((u-1)T-\frac12 ut)+\lambda T) ut
+\frac{((\rho^2 (u-1)-u+\frac12) -\frac12 \zeta u\frac{t}{T} (\rho^2-1)) + \rho \lambda )}{\xi^{\frac12}}]}
\end{equation}
\begin{equation}
\bar B=\{[(u-1) \rho^2-u+\frac12]T+ut(1- \rho^2)\}\zeta + \lambda \rho T
\end{equation}

\begin{remark}
The pricing of Geometric Asian options in Heston's model has been investigated by Kim and Wee in \cite{KW2011}. They
express the joint moment generating function of returns and integral average in terms of some series expansions. In fact, their series can be summed in closed form in terms of hypergeometric functions and agrees with our results above.
\end{remark}

\subsection{The Bates model}
In a model proposed by Bates \cite{Bates1996}, a jump component is
introduced in the previous dynamics for the log-returns by means of the
compound Poisson process $Z$:
\begin{equation}
Z_t=\sum_{i=1}^{N_t}J_i,
\end{equation}%
where $N$ is a standard Poisson process with intensity~$\nu >0$ and $%
(J_{i})$, $i=1,2,3,\ldots $, are independent random variables, all having a
normal distribution with mean $\gamma$ and standard deviation $\delta$.
In such a case the L\'{e}vy measure of $Z$ is given by:%
\begin{equation}
U(dx)={\frac{\nu }{\delta \sqrt{2\pi }}\exp \left[ -\frac{\left(
x-\gamma \right) ^{2}}{2\delta ^{2}}\right] },  \label{Normal}
\end{equation}
and the cumulant function of $Z$ takes the form:%
\begin{equation}
\kappa (z)=\nu(e^{\gamma z+\delta^2z^2/2}-1).
\end{equation}
The dynamics of discounted log-returns under the risk-neutral measure is then given by:
\begin{equation}
dX_t=(-\kappa(1)-\frac12V_t)dt+\sqrt{V_t}dW_t^1+dZ_t,  \label{Dynamics3}
\end{equation}%
and the dynamics of the volatility is the same as that proposed by the
Heston model, namely
\begin{equation}
dV_t=\lambda(\theta-V_t)dt+\zeta\sqrt{V_t}dW_t^2.
\end{equation}
The affine characteristics are
\begin{equation}\label{FC-Bates}
F(u,w)=\lambda\theta w+\kappa(u) - u\kappa(1),\quad
R(u,w)=\frac12(u^2-u)+\frac{\zeta^2}{2}u_2^2-\lambda w+uw\rho\zeta.
\end{equation}
If $\nu\to0$, then we obtain the Heston stochastic volatility model \cite{Heston}.
If $\zeta\to0$ and $V_0=\theta$ then $V_t=\theta$ we obtain the Merton jump-diffusion model
\cite{Merton1976}. Consequently we might consider the Bates model as an
extension of a Merton model to the case of stochastic volatility, or as an
extension of the Heston stochastic volatility model to the case of jumps in
the asset prices.

In the Bates model the Riccati equations for the average price are
\begin{align}
&\dot\phi=\lambda\theta\psi+\kappa(ut)-ut \kappa(1),\quad \phi(0)=0\\
&\dot\psi=\frac{\zeta^2}{2}\psi^2-(\lambda-\rho\zeta ut)\psi+\frac12ut(ut-1),\quad
\psi(0)=0.
\end{align}
We observe that the equation for $\psi$ is exactly the same as in the Heston
model and and $\phi$ equals the corresponding quantity from the Heston model
plus an integral of the cumulant of the jumps (quite easy to compute):
\begin{equation}
\phi(t)=\phi_H(t)+\int_0^t\kappa(us)ds - ut \kappa(1),
\end{equation}
where $\phi_H$ is given above in (\ref{Heston-Solution2}).
This is due to the fact, that the jumps are independent of
the continuous part.

For the average strike we obtain the same $\psi$ as in the Heston model,
while the $\phi$ is provided by the following expression (also easy to compute):
\begin{equation}
\phi(t)=\phi_H(t)+\int_0^t
\kappa\left(u\left(\frac{t}{T}-1\right)u+1\right)ds - \frac{ut}{T}\kappa(1) + (1-u)\kappa(1),
\end{equation}
where $\phi_H$ is given above, using
(\ref{Heston-barphi1})and (\ref{Heston-barphi2}).

\subsection{The Turbo-Bates model}

In \cite{Bates2000} Bates introduced a refinement of the previous model
with state-dependet jump intensity. Following \cite[Sec.6.2]{KellerRessel2010}
we will consider a simplified version with only one variance factor.
The risk-neutral dynamics for log-returns are given by
\begin{equation}
dX_t=\left(-\nu_0 \kappa(1) - \left(\frac{1}{2} + \nu_1 \kappa(1)\right) V_{t}\right)dt+\sqrt{V_t}
dW_t^1+\int_{D}x\tilde{N}(V_t,dt,dx)
\end{equation}%
\begin{equation*}
dV_t = -\lambda\left(V_t-\theta\right)dt+\zeta\sqrt{V_t}dW_t^2
\end{equation*}
where $\lambda ,\theta ,\zeta >0$ as before and the Brownian motion
$W^1,W^2$ are correlated with correlation coefficient $\rho$.
The jump component is given by $\tilde{N}(V_{t},dt,dx)=N(V_{t},dt,dx)-\mu
(V_t,dt,dx)$, where $N(V_{t},dt,dx)$ is a Poisson random measure and its
predictable compensator $\mu (V_{t},dt,dx)=(\nu_0+\nu_1V_t)F(dx)dt$,
and $F$ is some fixed jump size distribution.

The affine characteristics are
\begin{equation}\label{FC-Turbo-Bates}
F(u,w)=\nu_0\kappa(u) - u\nu_0 \kappa(1) 0+ \lambda\theta w,\quad
R(u,w)=\frac12(u^2-u)+\frac{\zeta^2}{2}w^2-\lambda w+\rho\zeta uw+\nu_1\kappa(u) - u \nu_1 \kappa(1)
\end{equation}
where $\kappa(u)$ is the cumulant generating function of $F$.

The Riccati equations for the average price are
\begin{align}
&\dot\phi=\lambda\theta\psi+\nu_0\kappa(ut/T) - ut/T \nu_0 \kappa(1),\quad \phi(0)=0\\
&\dot\psi=\frac{\zeta^2}{2}\psi^2-(\lambda-\rho\zeta ut/T)\psi
+\frac12 ut/T (ut/T -1)+\nu_1\kappa(ut/T) - ut/T \nu_1 \kappa(1),\quad
\psi(0)=0
\end{align}
and for the average strike
\begin{align}
\dot\phi &= \alpha\theta\psi+\nu_0\kappa(u(t/T-1)+1) - \left(\frac{ut}{T} + (1-u)\right) \nu_0 \kappa(1),
\quad\phi(0)=0\\
\dot\psi &= \frac12u^2(t/T-1)^2+u(t/T-1)
+\frac{\zeta^2}{2}\psi^2-\beta\psi+\rho\zeta(u(t/T-1)+1)\psi
+ \\ &+ \lambda_1\kappa((u(t/T-1)+1)) - \left(\frac{ut}{T} + (1-u)\right) \nu_1 \kappa(1),\\
\quad\psi(0)=0.
\end{align}

\subsection{Barndorff-Nielsen-Shephard model}
The BNS model has been introduced by Ole Barndorff-Nielsen and Neil Shephard.
\cite{BNS2001} , \cite{BNNS2002},
\cite{NicolatoVenardos2003}, \cite{HubalekSgarra2009}.
The model is constructed from a subordinator, called background driving L\'{e}vy process (BDLP),
with cumulant generating function
\begin{equation}
\kappa (\theta )=\log E[e^{\theta Z(1)}],
\end{equation}%
which exists for $\Re (\theta )<\ell $ with some real number $\ell >0$. The
instantaneous variance process $(V(t),t\geq 0)$ is described by the
following stochastic differential equation of Ornstein-Uhlenbeck type,
\begin{equation}
dV(t)=-\lambda V(t-)dt+dZ_{\lambda }(t),  \label{dV}
\end{equation}%
with $V_{0}>0$ and $\lambda >0$ given real numbers. The logarithmic return
process $(X(t),t\geq 0)$ is given by:
\begin{equation}
dX(t)=(-\kappa(\rho) -\frac{1}{2}V(t-))dt+\sqrt{V(t-)}dW(t)+\rho dZ_{\lambda }(t),\quad
X(0)=0,  \label{dX}
\end{equation}%
with parameters $\mu \in $, $\beta \in $, $\rho \leq 0$.
The affine characteristics are
\begin{equation}\label{FC-BNS}
F(u,w)=\lambda k(w+\rho u)-u \lambda k(\rho),\quad
R(u,w)=\frac12(u^2-u)-\lambda w.
\end{equation}
Riccati equations for average price are
\begin{align}
&\dot\phi=\lambda k(\psi+\rho ut/T)- ut/T \lambda k(\rho) && \phi(0)=0\\
&\dot\psi=\frac12(u^2t^2/T^2- ut/T)-\lambda\psi&& \psi(0)=0.
\end{align}
We remark the equation for $\psi$ is linear and can be solved explicitly,
giving
\begin{equation}
\psi(t)=
\frac{u^2}{2 T^2}f_2(t)-\frac{u}{tT}f_1(t).
\end{equation}
with
\begin{equation}
f_0(t)=\frac{1-e^{-\lambda t}}{\lambda},\qquad
f_1(t)=\frac{t}{\lambda}-\frac{1-e^{-\lambda t}}{\lambda^2},\qquad
f_2(t)=
\frac{t^2}{\lambda}-\frac{2t}{\lambda^2}
+\frac{2(1-e^{-\lambda t})}{\lambda^3}.
\end{equation}

The equation for $\phi$ yields an integral
\begin{equation}
\phi(t)=\int_0^t\lambda k\left(\psi(s)+\rho us/T \right)
-\frac{t^2}2 \frac{u}T \lambda k(\rho)
\end{equation}
Riccati equations for average strike are
\begin{align}
&\dot\phi=\lambda k(\psi+\rho(u(t/T-1)+1)) && \phi(0)=0\\
&\dot\psi=\frac12u^2(t/T-1)^2+u(t/T-1)-\lambda\psi&& \psi(0)=0.
\end{align}
The solution is quite analogous, now with
\begin{equation}
\psi(t)=\frac{u^2}{2T^2}f_2(T)+\frac{u}{T}\left(\frac12-u\right)f_1(T)
+\left(\frac{u^2}{2}-\frac{u}{2}\right)f_0(T).
\end{equation}
and the integral
\begin{equation}
\phi(t)=\int_0^T
\lambda k\left(
\psi(s)+\rho \left(\left(\frac{s}{T}-1\right)u+1\right)\right)ds
-\left(\left(1-T\right)u+T\right)\lambda k(\rho).
\end{equation}
The results agree\footnote{Actually term $-u/2$ is missing in \cite[(47)]{HS2011}
and should be included there.}
with those from \cite{HS2011}, which were obtained by a
different technique without employing the general affine framework and Riccati
equations.

\subsection{OU time-changed L\'evy processes}
Time-changed Levy processes have been introduced
by P.~Carr, H.~Geman, D.~Madan and M.~Yor \cite{CGMY2003}
in order to improve Levy models
performances in describing asset price dynamics. We shall concentrate our
attention on time changes based on processes satisfying a stochastic
differential equation of an Ornstein-Uhlenbeck or a square-roote (CIR)
type.
Let $L$ be a L\'{e}vy process with cumulant function
\begin{equation}
\theta(u)=\log E[e^{uL(1)}].
\end{equation}
Then we define
\begin{equation}
X_t=L(\Gamma(t)),  \label{eq:OUAssetPrice}
\end{equation}%
where $\Gamma(t)$ is a non-negative increasing process independent of $L$.
Here we would like to use a very popular time change,
namely an integrated Ornstein-Uhlenbeck (OU) type process.
\begin{definition}[OU time-change]
The OU time-change model is given as
\begin{equation}\label{Gamma}
\Gamma(t)=\int_0^tV(s)ds,
\end{equation}
where $V$ is now given as solution of the SDE
\begin{equation}
dV(t)=-\lambda V(t)dt+dU(t),
\end{equation}
with U being a pure jump subordinator with cumulant function $\kappa(u)$.
\end{definition}
The affine characteristics are
\begin{equation}\label{FC-OU-time-changed}
F(u,w)=\lambda\kappa(w),\quad
R(u,w)=-\lambda w+\theta(u).
\end{equation}

For the Riccati equations we get from (\ref{Ric-Price}) and (\ref{FC-OU-time-changed})
\begin{align}
&\dot\phi=\lambda\kappa(\psi) && \phi(0)=0\\
&\dot\psi=-\lambda\psi+\theta(ut)&& \psi(0)=0.
\end{align}

The Riccati equations for average strike are
\begin{align}\label{Ric-Strike}
&\dot\phi=\lambda\kappa(\psi)+q-r && \phi(0)=0\\
&\dot\psi=-\lambda\psi+\theta(u(t/T-1)+1)&& \psi(0)=0
\end{align}

Let us consider a concrete example of a time-changed L\'{e}vy given by a Kou
double exponential L\`{e}vy process with time change implied by an
integrated OU process. By recalling that the cumulant for the double
exponential has the following expression:%
\begin{equation}
\kappa (u)=\nu u\left[
\frac{p}{\alpha_{+}-u}-\frac{1-p}{\alpha_{-}+u}
\right],
\end{equation}
where $\nu$ is the intensity of the jump process,
$\alpha_{-},\alpha_{+}$ describe the exponential tails,
the Riccati equations for average price have the following explicit solution:
\begin{align}
\psi(t) =
&-e^{-\lambda t}\frac{\nu}{\lambda u}
\left\{[p\Ei(1,-\frac{\lambda \alpha_-}{u})e^{(\frac{\lambda \alpha _-}{u})}\lambda \alpha_- \Ei(1,\frac{\lambda
\alpha _+}{u})
e^{(\frac{\lambda \alpha _+}{u})}\lambda \alpha_+ +u+p \lambda
\alpha _+ \Ei(1,\frac{\lambda \alpha _+}{u})
e^{(\frac{\lambda \alpha _+}{u})}\right\} \\
&-e^{-\lambda t}\frac{\nu}{\lambda u}
\left\{[p\Ei(1,-\frac{\lambda (\alpha _-+ut)}{u})
e^{(\frac{\lambda (\alpha _- +ut)}{u})}\lambda \alpha_-
\Ei(1,\frac{\lambda (\alpha _+ -ut)}{u})
e^{(\frac{\lambda (\alpha _+ -ut)}{u})}
\lambda \alpha_+ \right.\\
&\left.+u+p \lambda \alpha _+ \Ei(1,\frac{\lambda (\alpha _+ -ut)}{u})
e^{(\frac{\lambda (\alpha _+ -ut)}{u})}\right\}
\end{align}
While the average strike Riccati equations have the following solution:
\begin{equation}
\psi(t) =
e^{-\lambda t}\frac{\nu}{u}
[p\lambda \alpha_ - T\Ei(1,-\frac{\lambda \alpha_- T-\lambda u T -\lambda u (T-t)}{u})e^{(\frac{\lambda \alpha _ - T-\lambda u T -\lambda u (T-t)}{u})}+u]
\end{equation}

\begin{equation*}
-e^{-\lambda t}\frac{\nu}{u}[p \lambda
\alpha _+ T\Ei(1,\frac{\lambda \alpha _+T-\lambda u T -\lambda u (T-t)}{u})
e^{(\frac{\lambda \alpha _+ T-\lambda u T -\lambda u (T-t)}{u})}]+
\end{equation*}

\begin{equation*}
-e^{-\lambda t}\frac{\nu}{u} [-p\lambda \alpha _+ T\Ei(1,-\frac{\lambda (\alpha _- T-\lambda u T -\lambda u (T-t))}{u})
e^{(\frac{\lambda (\alpha _- T-\lambda u T -\lambda u (T-t))}{u})}]
\end{equation*}

\subsection{CIR time-changed L\'evy processes}
Another time change which has been proposed in \cite{CGMY2003} for a Levy process in order
to improve its performances in describing logreturns statistical behavior,
is that driven by an integrated CIR process, i.e., a process satisfying
the following SDE:
\begin{equation}
dV_{t}=-\lambda \left( V_{t}-\theta \right) dt+\eta \sqrt{V_{t}}dW_t.
\end{equation}
The time-change and the returns process will be given by
(\ref{Gamma}) and (\ref{eq:OUAssetPrice}) as above.

The affine characteristics are
\begin{equation}\label{FC-CIR-time-changed}
F(u,w)=\lambda\theta w,\quad
R(u,w)=\frac{\eta^2}{2}w^2-\lambda w+\kappa(u).
\end{equation}
where $\kappa(u)$ is the cumulant generating function of the
L\'evy process.

From (\ref{Ric-Price}) and (\ref{FC-CIR-time-changed}) we obtain
the Riccati equations for average price
\begin{align}
&\dot\phi=\lambda\theta\psi && \phi(0)=0\\
&\dot\psi=\frac{\eta^2}{2}\psi^2+-\lambda\psi+\kappa(ut) && \psi(0)=0.
\end{align}

For average strike
\begin{align}
&\dot\phi=\lambda\theta\psi+q-r && \phi(0)=0\\
&\dot\psi=\frac{\eta^2}{2}\psi^2+-\lambda\psi+\kappa(u(t/T-1)+1)&& \psi(0)=0
\end{align}
Let us consider a concrete example of a time-changed L\'{e}vy given by a Kou
double exponential L\`{e}vy process with time change implied by an
integrated CIR process. By recalling that the cumulant for the double
exponential has the following expression:%
\begin{equation}
\kappa (u)=\nu u\left[
\frac{p}{\alpha_{+}-u}-\frac{1-p}{\alpha_{-}+u}
\right],
\end{equation}
where $\nu$ is the intensity of the jump process,
$\alpha_{-},\alpha_{+}$ describe the exponential tails,
the Riccati equation for $\psi$
becomes:
\begin{equation}
\dot\psi=\frac{\eta^2}{2}-\nu\dot\psi
\psi ^{2}+\lambda u\left[ \frac{p}{\nu _{-}-u}-\frac{1-p}{\nu _{+}+u}%
\right] ,\psi _{w}(0)=0.
\end{equation}%

For a symmetric jump distribution, i.e., $p=1/2$ and $\alpha_+=\alpha_-$
we can provide an explicit solution in terms of {\em Heun's confluent hypergeometric
function} $C$, see \cite{Ron1995,SK2010}:
\begin{equation}
\psi(t)=-\frac2{\eta^2}
\frac{y_2'(0)y_1'(t)-y_1'(0)y_2'(t)}{y_2'(0)y_1(t)-y_1'(0)y_2(t)}.
\end{equation}
\begin{equation}
y_1 =\exp(-\frac{\lambda t}{2}) (\alpha_+^2 -u^2 t^2) C(0,-\frac12 , 1, \frac{-\lambda^2 \alpha_+^2 +2 e t^2 \nu \alpha_+}{16u^2}, \frac{8u^2+\lambda^2 \alpha_+^2}{16u^2},\frac{u^2t^2}{\alpha_+^2})
\end{equation}
\begin{equation}
y_2 =\exp(-\frac{\lambda t}{2}) (\alpha_+^2 -u^2 t^2) C(0,+\frac12 , 1, \frac{-\lambda^2 \alpha_+^2 +2 e t^2 \nu \alpha_+}{16u^2}, \frac{8u^2+\lambda^2 \alpha_+^2}{16u^2},\frac{u^2t^2}{\alpha_+^2})
\end{equation}

\bigskip

\section{Concluding Remarks \label{concluding}}

In this paper we have just provided a framework for Geometric Asian options valuation
and we have shown that this valuation problem, through the general affine
process approach, can be reduced to solving some generalized Riccati equations
and that in many relevant cases these equations admit close-form solutions.
The final step of the present valuation procedure requires the numerical
inversion of a Laplace transform. This computation, which has become quite
standard in option pricing, nevertheless requires some care especially when
complicated special functions, like those considered insofar, are involved.
The research of a fast and accurate algorithm providing such inversion will
be the subject of our future investigation together with an extensive comparison
of the numerical methods available for Geometric Asian option pricing in affine
stochastic volatility models.
As we mentioned in Section~\ref{general}, proper integrability conditions must be verified in
order to apply our general pricing results: the existence of all the involved
cumulant functions must be assured; this can be investigated through the analysis
of the singularities of the special functions introduced.
This subject, together with a systematic numerical illustration of the present
results will be the subject of our future investigation and will be collected
in a separate paper.

\bigskip

\bigskip

\bigskip

\bigskip

\appendix
\section{Laplace formulae}
\begin{lemma}
Suppose we are given real numbers $S_0>0$, $K>0$, and $a>1$, $0<b<1$, $c<0$.
Then we have for all $x\in\mathbb R$ the formulas
\begin{equation}\label{lap-call}
(e^x-K)_+
=\frac1{2\pi i}\int\limits_{a-i\infty}^{a+i\infty}
\left(\frac{1}{K}\right)^u\frac{K}{u(u-1)}e^{u x}du,
\end{equation}
\begin{equation}\label{lap-put}
(K-e^x)_+
=\frac1{2\pi i}\int\limits_{c-i\infty}^{c+i\infty}
\left(\frac{1}{K}\right)^u\frac{K}{u(u-1)}e^{u x}du,
\end{equation}
and
\begin{equation}\label{lap-prot}
(e^x-K)_+-e^x=
\frac1{2\pi i}\int\limits_{b-i\infty}^{b+i\infty}
\left(\frac{1}{K}\right)^u\frac{K}{u(u-1)}e^{u x}du.
\end{equation}
\end{lemma}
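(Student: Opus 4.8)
The plan is to reduce all three identities to a single Bromwich‑integral computation and then read off the cases by closing the contour on the appropriate side. First I would substitute $s := e^x/K > 0$, so that $(1/K)^u e^{ux} = s^u$ and each right‑hand side becomes $\frac{K}{2\pi i}\int_{\gamma-i\infty}^{\gamma+i\infty}\frac{s^u}{u(u-1)}\,du$ with $\gamma$ equal to $a$, $c$, or $b$ respectively. On a vertical line $\Re u = \gamma$ one has $|s^u| = s^\gamma$ constant and $\bigl|\tfrac{1}{u(u-1)}\bigr| = O(|u|^{-2})$, so every such integral converges absolutely; moreover none of $\gamma \in \{a,c,b\}$ meets the only singularities of the integrand, the simple poles at $u=0$ and $u=1$, with
\begin{equation*}
\operatorname{Res}_{u=0}\frac{s^u}{u(u-1)} = -1, \qquad \operatorname{Res}_{u=1}\frac{s^u}{u(u-1)} = s .
\end{equation*}

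Next I would split according to the sign of $\tau := \log s = x - \log K$. If $\tau > 0$, I close the Bromwich contour by a large left semicircle: there $|s^u| = e^{\tau \Re u} \le s^\gamma$ stays bounded while the remaining factor is $O(R^{-2})$, so the arc contributes $O(R^{-1}) \to 0$; the resulting closed contour is positively oriented around the half‑plane $\Re u < \gamma$, hence the integral equals $2\pi i$ times the sum of the residues at those of $\{0,1\}$ lying to the left of the line. If $\tau < 0$, I close to the right instead, picking up (with a minus sign, for the clockwise orientation) the residues at the poles lying to the right. For $\tau = 0$, I note that $\tau \mapsto \frac{1}{2\pi i}\int_{\gamma-i\infty}^{\gamma+i\infty}\frac{e^{u\tau}}{u(u-1)}\,du$ is continuous (dominated convergence, using the $O(|u|^{-2})$ bound and $|e^{u\tau}| \le e^{|\tau|\gamma}$ locally), so its value there is the common limit of the two one‑sided expressions.

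Carrying this out pole by pole: for \eqref{lap-call}, $\gamma = a > 1$ has both poles to the left, giving $K(s-1)=e^x-K$ when $\tau>0$ and $0$ when $\tau<0$, which is $(e^x-K)_+$; for \eqref{lap-put}, $\gamma = c < 0$ has both poles to the right, giving $0$ when $\tau>0$ and $-K(s-1)=K-e^x$ when $\tau<0$, which is $(K-e^x)_+$; for \eqref{lap-prot}, $\gamma = b \in (0,1)$ separates the poles, so only $u=0$ is collected when $\tau>0$ (value $-K$) and only $u=1$ when $\tau<0$ (value $-Ks=-e^x$), which matches $(e^x-K)_+ - e^x$ in both regimes. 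The $\tau=0$ cases are consistent because the two payoff expressions coincide at $e^x=K$. The only delicate points are the bookkeeping of which pole lies on which side of each contour together with the orientation sign, and the separate continuity argument at $e^x=K$; there is no genuine analytic obstacle, since the $O(|u|^{-2})$ decay makes the contour closing elementary (no refined Jordan‑lemma estimate is needed). As an alternative presentation one could establish \eqref{lap-call} and \eqref{lap-put} as above and then deduce \eqref{lap-prot} by shifting the contour from $\Re u = c$ to $\Re u = b$ across the simple pole at $u=0$, whose residue contributes exactly $-K$. (The hypothesis $S_0>0$ is not used.)
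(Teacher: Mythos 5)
Your proof is correct, but it proceeds in the opposite direction from the paper's. The paper computes the forward (bilateral) Laplace transform of $f(x)=(e^x-K)_+$ by an elementary integration, obtaining $(1/K)^u K/(u(u-1))$ for $\Re u>1$, and then invokes a standard inversion theorem (Doetsch, Satz~4.4.1), using that $f$ is continuous and of locally bounded variation, to conclude that the Bromwich integral returns $f$; the other two formulas are treated as analogous. You instead evaluate the Bromwich integral directly by residues, closing the contour left or right according to the sign of $x-\log K$, using the $O(|u|^{-2})$ decay of $1/(u(u-1))$ to kill the arcs without Jordan's lemma, and handling the boundary case $e^x=K$ by a dominated-convergence continuity argument. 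Your residue bookkeeping checks out in all three cases (residues $-1$ at $u=0$ and $s$ at $u=1$, with the correct orientation signs), and the contour-shift derivation of \eqref{lap-prot} from \eqref{lap-put} is also sound, since $(e^x-K)_+-e^x=(K-e^x)_+-K$. What each approach buys: the paper's is shorter but outsources the analytic content to a cited inversion theorem; yours is fully self-contained and makes transparent exactly why the three formulas differ only in which poles the contour encloses — which is arguably more instructive, since it explains at a glance why $a>1$, $0<b<1$, $c<0$ produce the call, the covered call, and the put payoffs respectively. You are also right that the hypothesis $S_0>0$ is never used.
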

\begin{proof}
Let $f(x)=(e^x-K)_+$. An elementary calculation provides the (bilateral)
Laplace transform of $f$, namely
\begin{equation}
\int_{-\infty}^{+\infty}f(x)e^{-u x}dx=
\left(\frac{1}{K}\right)^u\frac{K}{u(u-1)}
\end{equation}
for $\Re u>1$.
Now $f$ is continuous and has locally bounded variation, which are sufficient
conditions to guarantee that the Laplace inversion integral (with Bromwhich contour) yields the original function, that is (\ref{lap-call}). See \cite[Satz 4.4.1, P.210]{Doe1}.
The proof for (\ref{lap-put}) and (\ref{lap-prot}) is similar.
\end{proof}

\newpage

\bibliographystyle{alpha}
\bibliography{asaff}

%

\end{document}